\newcommand*\circled[2][1.6]{\tikz[baseline=(char.base)]{
    \node[shape=circle, draw, inner sep=1pt, 
        minimum height={\f@size*#1},] (char) {\vphantom{WAH1g}#2};}}
\newtheorem{Thm}{Theorem}
\newtheorem{Cor}[Thm]{Corollary}
\newtheorem{Prop}[Thm]{Proposition}
\newtheorem{Def}[Thm]{Definition}
\newtheorem{fact}[Thm]{Fact}
\newcommand{\itSigma}{\mathit{\Sigma}}
\newcommand{\R}{{\mathbb R}}
\newcommand{\B}{\mathcal{B}}
\newcommand{\K}{\mathcal{K}}
\newcommand{\M}{\mathcal{M}}
\newcommand{\X}{\mathcal{X}}
\newcommand{\A}{\mathcal{A}}
\newcommand{\vc}{\vcentcolon =}             %%  :=  in definition
\newcommand{\cv}{= \vcentcolon}             %%  =:  in definition
\newcommand\utimes{\mathbin{\ooalign{$\cup$\cr%
   \hfil\raise0.42ex\hbox{$\scriptscriptstyle\times$}\hfil\cr}}}
\newcommand\bigutimes{\mathop{\ooalign{$\bigcup$\cr%
   \hfil\raise0.36ex\hbox{$\scriptscriptstyle\boldsymbol{\times}$}\hfil\cr}}}
\newcommand{\bx}{\boldsymbol{x}}
\newcommand{\ba}{\boldsymbol{a}}
\newcommand{\bA}{\boldsymbol{A}}
\newcommand{\bX}{\boldsymbol{X}}
\newcommand{\bp}{\boldsymbol{p}}
\newcommand{\bq}{\boldsymbol{q}}
\DeclareMathOperator{\idle}{idle}
\DeclareMathOperator{\doo}{do}
\newcommand{\Gl}{\mathcal{G}^\text{loop}}
\newcommand{\I}{\mathcal{I}}
\definecolor{shadecolor}{gray}{0.8}
\definecolor{lgray}{gray}{0.5}
\newcounter{mnotecount}[section]
\renewcommand{\themnotecount}{\thesection.\arabic{mnotecount}}
\newcommand{\mnote}[1]%{}
{\protect{\stepcounter{mnotecount}}$^{\mbox{\footnotesize
$%\!\!\!\!\!\!\,
\bullet$\themnotecount}}$ \marginpar{%\color{red}%
\raggedright\tiny\em
$\!\!\!\!\!\!\,\bullet$\themnotecount: #1} }
\definecolor{darkgreen}{rgb}{0,.5,0}
\newcommand{\Pbox}[2]{P \big( #1 \, \big\vert \, #2 \big)}
\newcommand{\PboxR}[3]{P_{#3} \big[ #1 \, \big\vert \, #2 \big]}
\begin{document}

%\preprint{}

\title{The operational no-signalling constraints and their implications}

\author{Micha\l\ Eckstein}
\email{michal.eckstein@uj.edu.pl}
\affiliation{Institute of Theoretical Physics, Faculty of Physics, Astronomy and Applied Computer Science, 
Jagiellonian University, ul.\ {\L}ojasiewicza 11, 30–348 Krak\'ow, Poland}
\affiliation{International  Centre  for  Theory  of  Quantum  Technologies,  University  of  Gda\'nsk, 
Jana Ba\.zy\'nskiego 8, 80-309
Gda\'nsk,  Poland}

\author{Tomasz Miller}
\affiliation{Copernicus Center for Interdisciplinary Studies, Jagiellonian University,
Szczepa\'nska 1/5, 31-011 Krak\'ow, Poland}
\affiliation{International  Centre  for  Theory  of  Quantum  Technologies,  University  of  Gda\'nsk, 
Jana Ba\.zy\'nskiego 8, 80-309
Gda\'nsk,  Poland}

\author{Ryszard Horodecki}
\affiliation{National Quantum Information Centre in Gda\'nsk, Jana Ba\.zy\'nskiego 8, 80-309 Gda\'nsk, Poland}
\affiliation{International  Centre  for  Theory  of  Quantum  Technologies,  University  of  Gda\'nsk, 
Jana Ba\.zy\'nskiego 8, 80-309
Gda\'nsk,  Poland}
\author{Ravishankar Ramanathan}
\affiliation{School of Computing and Data Science, The University of Hong Kong, Pokfulam Road, Hong Kong}
\author{Pawe\l\ Horodecki}
\affiliation{International  Centre  for  Theory  of  Quantum  Technologies,  University  of  Gda\'nsk, 
Jana Ba\.zy\'nskiego 8, 80-309
Gda\'nsk,  Poland}
\affiliation{Faculty of Applied Physics and Mathematics, National Quantum Information Centre,
Gda\'nsk University of Technology, Gabriela Narutowicza 11/12, 80-233 Gda\'nsk, Poland}

\date{\today}

\begin{abstract}
%\MEt{TO BE DONE} Max 200 words
The study of quantum correlations within relativistic spacetimes, and the consequences of relativistic causality on information processing using such correlations, has gained much attention in recent years. In this paper, we establish a unified framework in the form of operational no-signalling constraints to study both nonlocal and temporal correlations within general relativistic spacetimes. We explore several intriguing consequences arising from our framework. Firstly, we show that the violation of the operational no-signalling constraints in Minkowski spacetime implies either a logical paradox or an operational infringement of Poincar\'{e} symmetry. We thereby examine and subvert recent claims in [\href{https://journals.aps.org/prl/abstract/10.1103/PhysRevLett.129.110401}{\textit{Phys. Rev. Lett.} \textbf{129}, 110401 (2022)}] on the possibility of witnessing operationally detectable causal loops in Minkowski spacetime.  Secondly, we explore the possibility of jamming of nonlocal correlations, controverting a recent claim in [\href{https://www.nature.com/articles/s41467-024-54855-1}{\textit{Nat. Comm.} \textbf{16}, 269 (2025)}] that a physical mechanism for jamming would necessarily lead to superluminal signalling. Finally, we show that in black hole spacetimes certain nonlocal correlations under and across the event horizon can be jammed by any agent without spoiling the operational no-signalling constraints.
\end{abstract}

%\pacs{}% PACS, the Physics and Astronomy
                             % Classification Scheme.s
%\keywords{Suggested keywords}%Use showkeys class option if keyword
                              %display desired
\maketitle

\section{Introduction}

%\textit{Introduction.-}
Einstein's theory of relativity induces fundamental constraints on information processing in space and time. Within classical physics, they were implemented as a demand that all physical influences must flow within the future light-cones. This principle was challenged by the discovery of nonlocal correlations implied by quantum mechanics \cite{EPR}. It sparked the famous Einstein--Bohr debate \cite{EPR,BohrEPR,AspectBell2015} and inspired the influential Bell's theorem \cite{BellThm}, which demonstrates the incompatibility of certain quantum correlations between spacelike separated systems with classical causal explanations \cite{Bell_Nonlocal}. 

Nowadays, the topic of nonlocal correlations in spacetime is still very much at the focus of many studies in both fundamental and applied physics. One line of research investigates classical \cite{WoodSpekkens2015,Cavalcanti18,Pearl21,ColbeckVilasini25} and quantum \cite{QuantumCausality,QuantumCausSpekkens} causal models with the common goal of understanding and exploiting nonlocal quantum correlations. A related field focuses on information-processing tasks, such as device-independent generation of random numbers \cite{DI_randomness}, cryptographic key distribution \cite{QCryptography} or reduction of communication complexity \cite{Quantum_comm_complex}, within quantum (and `super-quantum' \cite{PR_box}) nonlocal theories. On the foundational side, it is of primary importance to understand the properties and implications of quantum correlations in strong gravitational fields near black holes \cite{Harlow,Marolf,Maldacena20}, within the far-sighted programme of unifying quantum mechanics with general relativity.

In spite of extensive studies on nonlocal correlations, some key questions remain open: What are the necessary and sufficient conditions for a physical theory to exclude the possibility of superluminal information transfer \cite{PR_box,PawelRaviCausality,Loops_PRL}? Does relativistic physics allow for cyclic influences \cite{Novikov2,DeutschCTC,Loops_PRL}? Is it possible to change nonlocal correlations at a distance without superluminal signalling \cite{PawelRaviCausality,Monogamy_HR,ColbeckVilasini25}? Are there device-independent cryptographic protocols, which remain secure in any physical theory \cite{NS_crypto,RC_crypto}?

Here we lay the groundwork to answer the above questions and more by putting forward the \emph{operational no-signalling constraints} and exploring their consequences. To this end, we establish a unified rigorous framework to handle spacelike and timelike correlations in general spacetimes. It is purely operational and based solely on spacetime geometry and the concept of a spacetime random variable \cite{SRV}. %, in contrast to the standard quantum frameworks exploiting the Choi--Jamio{\l}kowski isomorphism \cite{QuantumCausality}.
Our framework generalises the notion of an input to a general random variable, encompassing both the notion of measurement setting in a Bell-type experiment \cite{Bell_Nonlocal} and that of an intervention in causal modelling \cite{Pearl,ColbeckVilasini25}. In such a general context, we identify the \textit{necessary and sufficient} conditions for the impossibility of operational superluminal signalling under the natural assumption of agents' freedom of selecting their inputs. 

%Surprisingly, we find that the standard `free randomness' assumption adopted in Bell-type scenarios \cite{RandomnessAmp2012,PawelRaviCausality} is neither sufficient nor necessary to execute a successful signalling protocol. This finding challenges the standard viewpoint of interchangeability of locality and freedom-of-choice assumptions in the Bell test \cite{Blasiak21} and reveals an unexpected link with the concept of intervention in causal modelling.

The operational no-signalling constraints grounded in the established formalism prove to be a powerful tool to resolve some of the fundamental questions concerning nonlocal correlations in spacetime. 

Firstly, we show that its violation in Minkowski spacetime would either lead to a logical paradox or to an operational violation of the Poincar\'e symmetry. This undermines the validity of some recent claims on the possibility of witnessing operationally detectable causal loops in Minkowski spacetime \cite{Loops_PRA,Loops_PRL,ColbeckVilasini25}. On the other hand, in a curved spacetime, this implication holds only locally. This insight introduces a new twist in the study of nonlocal correlations, as it shows that the role of spacetime cannot be reduced to the relativistic causal structure. %relation between events.% in spacetime. 

Secondly, we explore the intriguing possibility of \emph{jamming} of nonlocal correlations, where one party, Bob, is able to influence the correlations between measurement outcomes of two space-like separated parties, Alice and Charlie, while at the same time not influencing their individual measurement statistics \cite{Jamming,PawelRaviCausality}. Recently, it has been claimed --- basing on certain monogamy relations for nonlocal correlations --- that a physical mechanism for jamming would necessarily lead to superluminal signalling \cite{Monogamy_HR}. We refute this claim by showing that diverse jamming scenarios are compatible with the operational no-signalling constraints. In particular, we show that there exist spacetime configurations, in which an agent can influence nonlocally the $n$-party correlations for an arbitrarily large $n$, without affecting any $k$-party correlation for any $k < n$. This implies that jamming could, in principle, be used to attack general device-independent cryptographic protocols based on nonlocal correlations in these spacetime configurations. 
%\ME{Please check that this is true.}  
We also derive general monogamy relations in signalling physical theories and clarify that the existence of such monogamies should be interpreted as frustrations caused by opposing constraints, rather than limitations imposed by the underlying physics.

Finally, we show that in black hole spacetimes the operational no-signalling constraints allow for jamming of certain nonlocal correlations below or across the event horizon by any agent. We conclude with a discussion of the implications of our findings for quantum information processing in general spacetimes, and questions for future work.

\section{Results}

\subsection{Basic concepts and notation}

%\textit{Preliminaries.-}

%Let us first establish the scene. 
We consider a general `black-box' experimental scenario with $n$ agents acting on a shared physical system. Every agent has an input and an output, which are denoted by two random variables (RV): $X_i$ taking values in some finite set $\X_i$ and $A_i$ with values in $\A_i$, respectively. The agents experiment on a physical system at hand by studying various statistics of the outputs, conditioned upon the chosen inputs. In general, the RVs can be of different types, depending on the details of the agents' physical interaction with the system and the experimental context. In particular, we can consider `intervening' agents which may have no outputs, $\A=\emptyset$ %\TM{In the last subsection of Methods a \emph{void set of values} $\A = \emptyset$ is mentioned in this context. It must be unified one way or another.} 
and `passive' agents, which need not have inputs, $\X = \{\text{idle}\}$.

%

%{\color{blue}No freedom here yet...} %Here comes the freedom of choice of the input statistics! An agent is always free to choose the message. This is a key assumption for \emph{operational} information transfer.}

%
We further assume that the scenario is embedded in a relativistic spacetime $\M$, which comes equipped with a binary relation $\prec$, called the (strict) causal relation. For any point $p \in \M$ it determines the subsets of $\M$ called the \emph{future} of $p$ and the \emph{past} of $p$, defined respectively as $R^+(p) = \{ q \in \M \, |\, p \prec q\}$ and $R^-(p) = \{ q \in \M \, |\, q \prec p\}$. It is customary to denote $p \preceq q$ if $p \prec q$ or $p =q$. See Methods for further precise definitions.

Consequently, we promote any RV to a \emph{spacetime random variable} (SRV), following the seminal work \cite{SRV}. An SRV is a pair $(X,p)$, where $X$ is an RV and $p$ is a point in the spacetime $\M$. Operationally, the notation $(x,p) \vc (X=x,p)$ designates an event at which the random variable $X$ assumes the value $x$, which happens at the spacetime point $p$. The probability of such an event is denoted as $P_X(x,p) \vc P_{X,p}(X=x,p)$.
%\TM{Why not $P_X$ also on the RHS?}.
%

Since we are dealing with two $n$-tuples of SRVs, it is convenient to use a `vector' notation, where $(\bX,\bp)$ stands for the $n$-tuple $(X_i,p_i)_{i=1}^n$. We shall also consider `subtuples', such as $(X_i,p_i)_{i \in F}$, for some subset of indices $F \subset \{1\ldots,n\}$ and denote them by $(\bX,\bp)^F$. This notation will also be applied to the tuples of spacetime points themselves, $\bp = (p_i)_{i = 1}^n$ and $\bp^F = (p_i)_{i \in F}$. Eventually, an experimental scenario is operationally described by the collection of conditional probabilities
\begin{align}\label{box}
\PboxR{(\ba,\bq)}{(\bx,\bp)}{\bA|\bX}.
\end{align}
The marginals are defined in the standard way:
\begin{align}\label{marginal}
\PboxR{(\ba,\bq)^G}{(\bx,\bp)}{\bA^G|\bX} = 
%\sum\limits_{k \notin G} 
\sum\limits_{a_k \in \A_k, k \notin G} \PboxR{(\ba,\bq)}{(\bx,\bp)}{\bA|\bX},
\end{align}
for any $G \subset \{1\ldots,n\}$.

The standard Bell scenario for spacelike separated agents with local measurement settings and outcomes is a special case of \eqref{box} with $q_i = p_i$ and $p_i \nprec p_j$, for all $i,j$ (see SI for more details). Our framework extends it in three aspects, which are natural from the physical perspective.

Firstly, while the agents' outputs are always associated with the outcomes of some measurements, the inputs are not limited to be the measurement settings. The `input' can describe operationally any influence of an agent on the physical system at hand (cf. e.g. \cite{Gisin2020NS} or \cite{NJP2025}). This includes a measurement with settings, but also, for instance, the use of a tunable quantum gate applied locally to a quantum system or a directed magnetic field. 
More abstractly, an agent can make an `intervention' on the system, forcing a given RV to take a desired value (cf. \cite{ColbeckVilasini25,Loops_PRL}) --- see Methods for the formal discussion.% 

Secondly, we recognize the fact that the input and output SRVs of an agent never happen exactly at the same spacetime point, although, clearly, the output must be in the input's future, $p_i \prec q_i$. For instance, in some loophole-free Bell tests \cite{Bell_2015_1,Bell_delay}, the effective delay between the choice of the settings and the generation of the output was typically around 3 \textmu s. It is precisely the existence of this delay that coerced the need to separate Alice's and Bob's detectors by over 1 km in order to close the locality loophole.

Thirdly, we do not assume that the agents are mutually spacelike separated --- some of them can be in the others' future or past. We thus treat both spacelike and timelike correlations on the same footing and within one general framework.

In summary, apart from the obvious demand $p_i \prec q_i$ for all $i$ we do not impose any \emph{a priori} relations between the spacetime location of agents' actions. Consequently, the collection of probabilities \eqref{box} describes the most general black-box experimental scenario in a relativistic spacetime. It includes, in particular, all scenarios with interventions studied recently in \cite{Ringbauer16,Loops_PRA,Loops_PRL,Grothus24,ColbeckVilasini25}.
%\ME{Check for other similar works}.

\subsection{Operational no-signalling constraints}

The spacetime structure induces restrictions on the admissible correlations of the form \eqref{box}. To understand these constraints, the following concepts will prove useful (see Fig. \ref{fig:OS}):

\begin{Def}\label{opsepDef}
    For a tuple of spacetime points $\bq$ we define their \emph{gathering point} as any point $Q$, such that $q_j \preceq Q$ for all $j$. % \in \bigcap_{j=1}^{n} J^+(q_j)$. 
    We say that a tuple $\bq$ of spacetime points is \emph{operationally separated} from a tuple $\bp$ if there exists a gathering point $Q$ for $\bq$, which lies outside of the future of all $p_i$'s, that is $p_i \nprec Q$ for all $i$.
    
    %Analogously, we define their \emph{orchestrating point} as any point $P$ in their common past, $p_O \in \bigcap_{j=1}^{n} R^-(q_j)$.
\end{Def}

The existence of the maximal speed of signal propagation --- the speed of light in vacuum --- implies that information from spacelike separated laboratories must first be gathered before it is compared and jointly processed. More formally, if we have two SRVs $(A_1,q_1)$ and $(A_2,q_2)$, then a common SRV $((A_1,A_2),Q)$ is defined only if $Q$ is a gathering point for $q_1$ and $q_2$. This stems from the fact that there is no SRV if there is no agent who could read it out.

The operational separation relation differs from the standard \emph{spacelike separation relation} involved in Bell-type scenarios in two key aspects: Firstly, it is asymmetric --- we can have $\bq$ operationally separated from $\bp$, but $\bp$ not operationally separated from $\bq$. Secondly, if $\bq$ is operationally separated from $\bp$ then, for any pair of indices $i,j$, the point $q_j$ is either spacelike separated from $p_i$ or it is in $p_i$'s past. These features capture the time-directionality inherent in information processing protocols in spacetime. See Methods and SI for the technical details. 
 %See Fig. \ref{fig:ONS} for an illustration of the concepts in Def. \ref{opsepDef} and Methods and SI for the technical details.

\begin{figure}[H]
\begin{center}
\resizebox{0.6\textwidth}{!}{\includegraphics[scale=1]{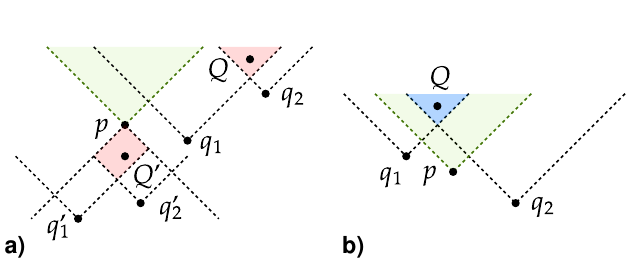}} 
\caption{\label{fig:OS}An illustration of the operational separation relation. In panel \textbf{a)} all points $q_1,q_2,q_1',q_2'$ are operationally separated from the point $p$. Also, both tuples $(q_1,q_2)$ and $(q_1',q_2')$ are operationally separated from $p$, because there exist gathering points, $Q$ and $Q'$ respectively, which lie outside of $p$'s future. On the other hand, $p$ is operationally separated from $(q_1,q_2)$, but it is not operationally separated from $(q_1',q_2')$. In panel \textbf{b)} the tuple $(q_1,q_2)$ is not operationally separated from $p$, although both single points $q_1$ and $q_2$ are operationally separated from $p$. Also, $p$ is operationally separated from $q_1$, and $q_2$, and $(q_1,q_2)$.}
\end{center}
\end{figure}

%\begin{Def}

We are now in a position to formulate the announced no-signalling constraints.

\begin{Def}
\label{opcausDef}
We say that correlations of the form \eqref{box} satisfy the \emph{no-signalling constraints} when for any $F, G \subset \{1,\ldots,n\}$:
\begin{equation}\label{ONS_new}
\begin{aligned}
& \text{ if }\bq^G \text{ is operationally separated from } \bp^F, \text{ then}
\\
& \PboxR{(\ba,\bq)^G}{(\bx,\bp)}{\bA^G|\bX} = \PboxR{(\ba,\bq)^G}{(\bx',\bp)}{\bA^G|\bX},
\end{aligned}
\end{equation}
for any $\ba$, any $\bx$ and any $\bx'$ such that $x'_i = x_i$ for all $i \notin F$.
\end{Def}

In words, if the subset of output points $\bq^G$ is operationally separated from the subset of input points $\bp^F$, then manipulating the inputs pertaining to $ F$ cannot influence the output statistics pertaining to $G$. As we will now show, these constraints rule out the possibility of operational superluminal signalling --- hence the name.

\medskip

The operational notion of ``signalling'' is based on the classical concept of information transfer \cite{Shannon}: The input RV can be employed by an agent to encode a ``\emph{message}'', which is then processed by a (possibly nonlocal) physical system and decoded from the statistics of the output RVs. The message itself is freely selected, from a set of possible messages, by the sending agent. The \emph{free selection} is implemented in our framework by demanding that an agent can force a chosen value of an RV, $X = x$, independently of any external influences. In the language of causal modelling \cite{Pearl}, such an operation corresponds to an \emph{intervention} on an RV (see Methods).

\begin{Thm}\label{thm:MAIN}
Under the assumption that the input RVs can be freely selected, the no-signalling constraints \eqref{ONS_new} are necessary and sufficient for the lack of operational superluminal signalling.
\end{Thm}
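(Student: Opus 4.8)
\emph{Proof strategy.} The first step is to make precise what ``operational superluminal signalling'' means in this framework, after which the equivalence with \eqref{ONS_new} should fall out almost by definition. Operationally, a signalling attempt consists of: a set $F$ of \emph{senders} who, exercising their freedom to select inputs, force $(X_i)_{i\in F}$ to encode a message $m$ drawn from a set of at least two messages --- concretely, the senders choose between two input configurations $\bx$ and $\bx'$ that agree on every coordinate outside $F$, so that only the $F$-inputs carry the message while the remaining inputs stay fixed and pre-agreed; and a \emph{decoder} who collects the outputs $(A_j,q_j)_{j\in G}$ of a set $G$ of agents and tries to recover $m$. Because the joint SRV $\big((A_j)_{j\in G},Q\big)$ exists only at a gathering point $Q$ for $\bq^G$ --- there being no SRV without an agent who can read it out --- the decoder necessarily acts at such a $Q$. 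The attempt succeeds (transmits information) iff $\PboxR{(\ba,\bq)^G}{(\bx,\bp)}{\bA^G|\bX}\neq\PboxR{(\ba,\bq)^G}{(\bx',\bp)}{\bA^G|\bX}$ for some outcome string $\ba^G$, and it is \emph{superluminal} precisely when $Q$ lies outside the future of every sending event, $p_i\nprec Q$ for all $i\in F$ --- i.e.\ exactly when $\bq^G$ is operationally separated from $\bp^F$ in the sense of Definition~\ref{opsepDef}.

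Sufficiency is then immediate: if \eqref{ONS_new} holds, then for every $F,G$ with $\bq^G$ operationally separated from $\bp^F$ the conditional distribution of $\bA^G$ does not depend on the inputs attached to $F$, so no decoder at the corresponding gathering point can learn anything about $m$, and no superluminal attempt can succeed.

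For necessity I would argue by contraposition. Suppose \eqref{ONS_new} fails: there exist $F,G$ with $\bq^G$ operationally separated from $\bp^F$, input configurations $\bx,\bx'$ agreeing off $F$, and an outcome string $\ba^G$ with $\delta \vc \big|\PboxR{(\ba,\bq)^G}{(\bx,\bp)}{\bA^G|\bX}-\PboxR{(\ba,\bq)^G}{(\bx',\bp)}{\bA^G|\bX}\big|>0$. Fix a gathering point $Q$ for $\bq^G$ with $p_i\nprec Q$ for all $i\in F$; by transitivity of $\prec$ this already forces $p_i\nprec q_j$ for all $i\in F,\,j\in G$, so no relativistic signal from a sending event can reach $Q$ through the outputs in $G$. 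Now run the experiment $N$ times on independent copies of the shared system: on each run the senders use free selection to impose $\bx$ or $\bx'$ according to the message bit (all other agents holding the fixed inputs), and the decoder at $Q$ records the empirical frequency of $\ba^G$. A standard concentration estimate (Hoeffding) shows that a threshold test on this frequency distinguishes $m=0$ from $m=1$ with error probability $\to 0$ as $N\to\infty$, so a strictly positive rate of information flows from the events $\bp^F$ to $Q$; since $Q$ is not in the future of any sending event, this is operational superluminal signalling, contradicting the hypothesis. Hence \eqref{ONS_new} must hold.

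Two things will need the most care, and I expect them to be the crux. First, the free-selection (intervention) assumption is doing real work: it is what turns the dependence exhibited above into transmission of a \emph{freely chosen} message rather than a passively observed correlation that a common cause in the system's shared past could explain --- without it neither the encoding step nor the diagnosis ``signalling'' is licensed, and this is precisely where the framework's notion of intervention, as opposed to mere conditioning, must be invoked. Second, the decoding step presupposes that the experiment is repeatable on independent runs, the usual operational proviso behind any information-transfer statement; I would state it explicitly. Beyond these, the only genuinely fiddly part is checking that the combinatorial quantifiers in Definition~\ref{opcausDef} (all $F,G$, all $\bx,\bx'$ agreeing off $F$, all $\ba$) match exactly the class of admissible encode/decode protocols once one notes that marginalising over the outputs outside $G$ and freezing the inputs outside $F$ leaves the senders with precisely the handle $\bX^F$ and the decoder with precisely the view $\bA^G$; the handful of degenerate spacetime configurations (such as coincident action points) are absorbed into the standing conventions fixed in Methods.
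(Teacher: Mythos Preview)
Your necessity argument has a gap that the paper's proof closes with an extra step you omit. You let the senders in $F$ \emph{collectively} switch between $\bx$ and $\bx'$ to encode the bit, and declare the protocol ``superluminal'' because the gathering point $Q$ lies outside the future of every $p_i$, $i\in F$. But if $|F|>1$ and the $p_i$'s are mutually spacelike, those senders cannot each exercise free selection \emph{and} agree on the same freely chosen message unless the message originates at some common-past orchestrating point $p_0\preceq p_i$ for all $i\in F$; the operational encoding event is then $p_0$, and nothing in your hypothesis excludes $p_0\prec Q$, in which case the transfer is perfectly subluminal. In effect your working definition of ``operational superluminal signalling'' is just the negation of \eqref{ONS_new}, which makes the equivalence a tautology rather than a statement about the standard point-to-point notion (one sender at a single $p$, one receiver at $Q$, $p\nprec Q$) that the paper adopts.

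The paper fixes this with a telescoping reduction: writing $F=\{i_1,\ldots,i_r\}$, list the $r{+}1$ marginals obtained by flipping the $F$-inputs one at a time from $\bx$ to $\bx'$; since the endpoints differ, some consecutive pair differs, yielding a \emph{single} index $i_k$ whose input alone changes the $G$-marginal. Because $\bq^G$ is operationally separated from $\bp^F$ it is also operationally separated from the single point $p_{i_k}$ (the top-right implication in \eqref{prop1}), so there is a gathering point $Q$ with $p_{i_k}\nprec Q$. A single agent can now fix all other inputs at an earlier orchestrating point (which may well have $Q$ in its future --- harmless, since no information is encoded there), then travel to $p_{i_k}$ and freely encode the bit there alone: genuine point-to-point superluminal signalling from $p_{i_k}$ to $Q$. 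Your repeated-trials/Hoeffding layer and the remarks on degenerate configurations are fine but orthogonal to this missing reduction.
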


If \eqref{ONS_new} is violated, then the proof consists in showing that there exists an SRV $(B,Q)$, with $B = f(\ba^G)$ for some non-constant function $f$ and an index $j \in F$, such that 
\begin{align}
\PboxR{(b,Q)}{(\bx,\bp)}{B|\bX} \neq \PboxR{(b,Q)}{(\bx',\bp)}{B|\bX},
\end{align}
where $x'_j \neq x_j$ and $x'_i = x_i$ for $i\neq j$. Because $Q$ is a gathering point for $\bq^G$ and $\bq^G$ is operationally separated from $\bp^F$, then $p_j \nprec Q$. Consequently, a sending agent freely selecting the input $X_j = x_j$ or $X_j = x_j'$ at the point $p_j$ can change the local detection statistic of the RV $B$ registered by an agent at $Q$ outside of the sender's future.

The complete general proof of Thm. \ref{thm:MAIN} is presented in SI. Below, we illustrate the signalling protocol for the case $n=4$ and under the assumption that \eqref{ONS_new} is violated for $F=\{1,4\}$ and $G=\{2,3\}$. Concretely, let us assume that
\begin{equation}
\begin{aligned}
     &\PboxR{(a_2,q_2),(a_3,q_3)}{(x_1,p_1),(x_2,p_2),(x_3,p_3),(x_4,p_4)}{A_2,A_3|\bX}\\
    \neq \; & \PboxR{(a_2,q_2),(a_3,q_3)}{(x_1',p_1),(x_2,p_2),(x_3,p_3),(x_4,p_4)}{A_2,A_3|\bX},
\end{aligned}
\end{equation}
with the spacetime configuration as in Fig. \ref{fig:thm}. An agent following a timelike curve (thick blue line) firstly prepares the inputs $x_2$, $x_3$ and $x_4$ at the orchestrating point $p_0$ and then encodes his bit by choosing the input $x_1$ or $x_1'$ at point $p_1$. By doing so, he influences the joint output statistics of RVs $A_2$ and $A_3$, which can be read out by an agent at the gathering point $Q$. In this way, statistical superluminal information transfer is performed from point $p_1$ in spacetime to point $Q$ outside of $p_1$'s future. Note that the outcomes at $q_1$ and $q_4$ do not play any role, but the selection of concrete inputs $x_2$, $x_3$, $x_4$ is necessary.

\begin{figure}[H]
\begin{center}
\resizebox{0.6\textwidth}{!}{\includegraphics[scale=1]{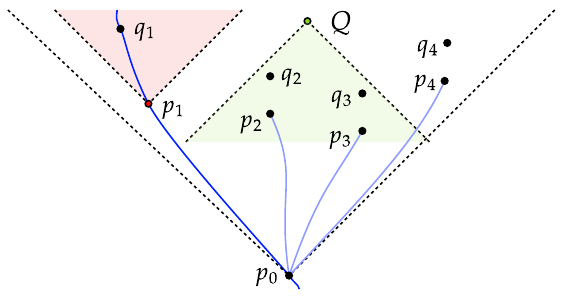}} \qquad
\caption{\label{fig:thm} An illustration of the signalling protocol exploiting a violation of constraint \eqref{ONS_new}. See text for the description.}
\end{center}
\end{figure}

\subsection{Jamming of nonlocal correlations}

The fact that delocalised information in spacetime needs to be gathered leads to an intriguing effect known as \emph{jamming} of nonlocal correlations \cite{Jamming,PawelRaviCausality}. The simplest scenario involves three agents in a spacetime configuration as in Fig. \ref{fig:OS} \textbf{b)}. Alice and Charlie read out the SRVs $(A_1,q_1)$ and $(A_2,q_2)$, respectively, while Bob is free to select the value of the SRV $(X,p)$. The no-signalling constraints \eqref{ONS_new} forbid Bob's influence on Alice's and Charlie's local statistics,
\begin{align}\label{jam0}
    \PboxR{(a_1,q_1)}{(x,p)}{A_1 | X} = \PboxR{(a_1,q_1)}{(x',p)}{A_1 | X} \quad \text{ and } \quad \PboxR{(a_2,q_2)}{(x,p)}{A_2 | X} = \PboxR{(a_2,q_2)}{(x',p)}{A_2 | X},
\end{align}
for all values $a_1,a_2, x, x'$. On the other hand, because the tuple $(q_1,q_2)$ is \emph{not} operationally separated from $p$, Bob's influence on correlations between Alice's and Charlie's outcomes is allowed,
\begin{align}\label{jam1}
    \PboxR{(a_1,q_1),(a_2,q_2)}{(x,p)}{A_1 A_2 | X} \neq \PboxR{(a_1,q_1),(a_2,q_2)}{(x',p)}{A_1 A_2 | X},
\end{align}
for some values $a_1,a_2, x \neq x'$. Correlations with properties \eqref{jam0} and \eqref{jam1} are known in the literature as `\emph{relativistically causal}'. Theorem \ref{thm:MAIN} shows unequivocally that such an effect does not facilitate operational superluminal signalling,  in contradiction to some recent claims \cite{Monogamy_HR} --- these are analysed in detail below.

The possibility of jamming has been shown to have devastating consequences for the security of certain device-independent (DI) cryptographic protocols \cite{PawelRaviCausality,RC_crypto}. Here we strengthen these results by showing that, in some specific spacetime configurations, a relativistic adversary can successfully attack a DI cryptographic protocol based on nonlocal correlations between an arbitrarily large number of parties.

\begin{Thm}\label{thm:jam}
    Let $\M$ be a spacetime with at least two spatial dimensions. For any $n \geq 2$, it is always possible to arrange a configuration of spacetime points $p, q_1, \ldots, q_n$ in such a way that $\bq$ is not operationally separated from $p$, but $\bq^G$ is operationally separated from $p$ for any subset $G \subsetneq \{1,\ldots,n\}$.
\end{Thm}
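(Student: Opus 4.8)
The plan is to write the configuration down explicitly in $(1{+}2)$‑dimensional Minkowski space $\R^{1,2}$, and to obtain the general case afterwards: every causal relation used below will be a strict one (the $q_i$ and all auxiliary gathering points will be strictly spacelike to $p$, resp.\ strictly timelike‑related to the $q_i$, and the sets $R^+(q_i)$ will overlap only well inside $R^+(p)$), so the configuration survives $C^0$‑small perturbations of the metric and can be reproduced, suitably rescaled, inside a small normal neighbourhood of any point of a spacetime $\M$ with at least two spatial dimensions. So let $p$ be the origin of $\R^{1,2}$ and look for $q_i=(\tau,\br_i)$ sharing one time coordinate $\tau$, with the $\br_i$ on the unit circle of a spacelike plane.

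The first step is to rephrase operational separation from a single point. If $q_i=(t_i,\br_i)$ is spacelike to $p$ (so $|t_i|<|\br_i|$), then $Q=(T,\bx)$ is a gathering point of $\bq^G$ exactly when $T\ge t_i+|\bx-\br_i|$ for all $i\in G$, and $p\nprec Q$ exactly when $T<|\bx|$. Taking $T$ minimal, squaring the inequalities $|\bx-\br_i|<|\bx|-t_i$ (legitimate, their right‑hand sides being positive), and sending $\bx$ to infinity along a ray $\bx=r\bu$, one gets the criterion
\[
\bq^G\ \text{operationally separated from}\ p\iff \exists\,\bu\in S^{1}:\ \bu\cdot\br_i>t_i\ \ \text{for all}\ i\in G.
\]
For $\br_i$ on the unit circle this is a covering statement about the circular caps $\{\bu:\bu\cdot\br_i>t_i\}$, and this is exactly where the hypothesis enters: the spatial directions lie on $S^{d-1}$, and one needs $n$ of them spread out so that every $n-1$ fit in one cap while all $n$ do not, which is impossible on $S^{0}$. (Equivalently, keeping all $q_i$ on $p$'s own time slice would force the origin into the convex hull of $n$ coplanar points but into no proper sub‑hull, which Carath\'eodory forbids for $n\ge4$; the common temporal shift $\tau$ is the freedom that defeats this, by fattening the caps past a hemisphere.)

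Concretely I would take $\br_i=\bigl(\cos\tfrac{2\pi i}{n},\sin\tfrac{2\pi i}{n}\bigr)$, the $n$‑th roots of unity, and $t_i=\tau$ with $\tau$ any number in
\[
-\cos\tfrac{\pi}{n}\ <\ \tau\ <\ -\cos\tfrac{2\pi}{n},
\]
an interval non‑empty for every $n\ge2$ (since $\cos$ is strictly decreasing on $(0,\pi]$) and contained in $(-1,1)$, so each $q_i$ is spacelike to $p$. To see that the full tuple is not operationally separated, I would show every gathering point $Q=(T,\bx)$ has $T\ge|\bx|$ (if $\bx=0$ this is immediate): some root $\br_j$ lies within angular distance $\pi/n$ of $-\bx/|\bx|$, so $\bx\cdot\br_j\le-|\bx|\cos\tfrac{\pi}{n}$, and inserting this into $T-\tau\ge|\bx-\br_j|$ and squaring, the inequality $\tau>-\cos\tfrac{\pi}{n}$ together with $|\tau|<1$ makes $T<|\bx|$ contradictory; hence $p\prec Q$. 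For a proper subset, by monotonicity of operational separation under $G'\subseteq G$ it suffices to treat $G=\{1,\dots,n\}\setminus\{k\}$: since $\br_k\cdot\br_i\le\cos\tfrac{2\pi}{n}$ for $i\ne k$, the point $\bx=-R\br_k$ with $R$ large satisfies $|\bx-\br_i|\le\sqrt{R^2+2R\cos\tfrac{2\pi}{n}+1}$, so $Q=\bigl(\tau+\sqrt{R^2+2R\cos\tfrac{2\pi}{n}+1},\,-R\br_k\bigr)$ is a gathering point of $\bq^G$, and $\tau<-\cos\tfrac{2\pi}{n}$ makes its time coordinate $<R=|\bx|$ once $R$ is large, so $p\nprec Q$ and $\bq^G$ is operationally separated from $p$.

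What remains is only the two elementary squaring estimates, together with the remark that the $\tau$‑window above is precisely the gap between the spherical circumradius of the full root system and that of any $(n-1)$‑element subset. I expect the sole real obstacle to be conceptual rather than technical: one must recognise that the naive ``all outputs on $p$'s time slice'' picture is hopeless for $n\ge4$, and that a single common time offset of the whole output tuple relative to $p$ is exactly what makes arbitrarily many jammed parties possible with only two spatial dimensions.
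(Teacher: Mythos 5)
Your proposal is correct, and it uses the same geometric configuration as the paper (the $n$-th roots of unity on a spacelike unit circle, offset in time from $p$ by an amount strictly between $\cos\tfrac{2\pi}{n}$ and $\cos\tfrac{\pi}{n}$; the paper places $p$ at $(h,0,0)$ with $h\in(\cos\tfrac{2\pi}{n},\cos\tfrac{\pi}{n}]$ and the $q_j$ at time $0$, which is your configuration translated in time). Where you genuinely diverge is in the verification step: the paper checks the inclusion $\bigcap_j J^+(q_j)\subset R^+(p)$ slice by slice, computing the circumradius of intersections of growing discs as explicit functions $f(t)$, $g(t)$ and showing these are monotone so that only the $t\to\infty$ limit matters. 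You instead prove up front a clean characterization of operational separation from a single point $p=0$ in Minkowski space, namely $\bq^G$ is operationally separated from $p$ iff there exists a direction $\bu\in S^{d-1}$ with $\bu\cdot\br_i>t_i$ for all $i\in G$ (with the "only if" following from $|\bx-\br_i|\ge|\bx|-\bu\cdot\br_i$ and the "if" by taking $\bx=r\bu$, $r\to\infty$). This directional criterion is exactly the $t\to\infty$ limit that the paper extracts after computing $f(t)$ and $g(t)$, but isolating it as a lemma turns the whole problem into a transparent statement about open spherical caps $\{\bu:\bu\cdot\br_i>\tau\}$ having a common point for every proper subtuple but not for the whole tuple, which is exactly calibrated by placing $\tau$ between the circumradii $-\cos\tfrac\pi n$ and $-\cos\tfrac{2\pi}n$. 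The resulting verification is shorter and also makes immediately visible why the construction cannot work in $1+1$ dimensions (on $S^0$ there are only two caps), a point the paper mentions but does not derive from its disc computations. One small omission: you restrict the directional criterion to $q_i$ spacelike to $p$, but it holds without that assumption, and you should note that the $t\to\infty$ limit in the "if" direction requires $t_i<\bu\cdot\br_i$ strictly (at equality the squared inequality needs $|t_i|>|\br_i|$), which is fine here since you work with open conditions throughout.
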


This result is proven in SI and illustrated in Fig. \ref{fig:njam}. Suppose now that we have a collection of conditional probabilities $\PboxR{(\ba,\bq)}{(x,p)}{\bA | X}$ for $n+1$ parties\footnote{For sake of simplicity we assume that the reading agents do not have any inputs and the jammer has no output. These features can be added to the protocol without affecting the main conclusion.}, in a spacetime configuration as in Thm. \ref{thm:jam}, with the properties
\begin{align}
    \PboxR{(\ba,\bq)^G}{(x,p)}{\bA^G | X} = \PboxR{(\ba,\bq)^G}{(x',p)}{\bA^G | X},
\end{align}
for all $G \subsetneq \{1,\ldots,n\}$, $\ba^G$, $x$ and $x'$,
but 
\begin{align}
    \PboxR{(\ba,\bq)}{(x,p)}{\bA | X} \neq \PboxR{(\ba,\bq)}{(x',p)}{\bA | X},
\end{align}
for some $\ba$ and $x \neq x'$. Thm. \ref{thm:jam}, together with Thm. \ref{thm:MAIN}, show that such a scenario does not facilitate any operational superluminal signalling. An agent with an operational access to the SRV $(X,p)$ can freely affect the $n$-party correlations, without changing any $k$-party correlations for $k < n$. It means that all $n$ agents must meet and compare their outcomes to check whether someone has tampered with their correlations. This leads to the announced conclusion:
\begin{Cor}\label{cor:jam}
   For spacetime configurations as in Thm. \ref{thm:jam} any device-independent cryptographic protocol for random number generation or key distribution based on nonlocal correlations between an arbitrarily large number of parties can be successfully attacked (in the sense of \cite{PawelRaviCausality,RC_crypto}) by a relativistic adversary.
\end{Cor}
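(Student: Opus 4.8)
The plan is to convert the spacetime configuration supplied by Theorem~\ref{thm:jam}, together with the jamming box displayed just above, into an explicit relativistic attack, and then to check that it matches the notion of a successful attack used in \cite{PawelRaviCausality,RC_crypto}. First I would fix the setting: the $n$ honest parties run a device-independent randomness-generation or key-distribution protocol whose security certificate is a bound $h>0$ on the certified entropy of the joint outcome string $\ba=(a_1,\dots,a_n)$ conditioned on any adversary's side information, a bound the protocol derives from the observed $n$-party correlation $P[\ba\,|\,\bq]$ (say, from a Bell-type violation). I place the adversary Bob at the point $p$ of the Theorem~\ref{thm:jam} configuration, holding the input SRV $(X,p)$ and no output, while the $n$ readers sit at $q_1,\dots,q_n$ with idle inputs. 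Bob's two messages $x,x'$ are the two settings of the jamming box of the preceding paragraph, chosen so that for $x$ the $n$-party correlation is the intended ``good'' one, while for $x'$ it is a ``compromised'' correlation that Bob can predict --- ideally a deterministic function of, or strongly correlated with, his record of the chosen setting --- so that conditioned on Bob's side information the certified entropy of $\ba$ falls below $h$, in the extreme case to $0$ (the random string known to Bob, or the key leaked).

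Second I would verify that this adversary is genuinely relativistic, i.e.\ that no superluminal signalling is involved. The only potentially nontrivial no-signalling constraints \eqref{ONS_new} are those whose input subset $F$ contains Bob's index, since the readers have no inputs. For each proper $G\subsetneq\{1,\dots,n\}$ the marginal $P[(\ba,\bq)^G\,|\,(x,p)]$ is, by construction of the jamming box, independent of $x$, and by Theorem~\ref{thm:jam} the tuple $\bq^G$ is operationally separated from $p$, so \eqref{ONS_new} is satisfied for these $(F,G)$; and for $G=\{1,\dots,n\}$ the premise of \eqref{ONS_new} fails because $\bq$ is not operationally separated from $p$. Hence the whole family $\{P[\,\cdot\,|\,(x,p)]\}$ obeys \eqref{ONS_new}, and Theorem~\ref{thm:MAIN} guarantees that Bob's choice of setting performs no operational superluminal signalling: the attack respects the causal structure of $\M$.

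Third I would argue that the honest parties cannot detect the tampering in time to protect the protocol. Any statistic they can form without bringing all $n$ outcomes to a common location is a function of the marginals $P[(\ba,\bq)^G\,|\,(x,p)]$ with $G\subsetneq\{1,\dots,n\}$, and these coincide in the good and the compromised run, so no such check distinguishes the two cases. To reconstruct the full $n$-party correlation --- the only object whose change could reveal the attack --- they must gather all of $a_1,\dots,a_n$ at a point $Q$ with $q_j\preceq Q$ for every $j$; and since $\bq$ is not operationally separated from $p$, every such $Q$ lies in Bob's future, $p\prec Q$. Thus before any such gathering --- at each $q_j$, and at every event not lying to the future of all of $q_1,\dots,q_n$, including the events where the generated randomness or key is typically used --- the honest parties hold no certificate that their correlation has not been jammed, while Bob has already reduced its certified entropy. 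This is precisely a successful attack in the sense of \cite{PawelRaviCausality,RC_crypto}, and since the construction works verbatim for every $n\ge 2$, the corollary follows.

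The step I expect to be the main obstacle is the construction of the ``compromised'' box: one must exhibit, for the configuration of Theorem~\ref{thm:jam}, a conditional distribution on $n+1$ parties that simultaneously (i)~keeps every $<n$-party marginal equal to that of the honest box, (ii)~changes the $n$-party correlation enough to push the protocol's security figure of merit across its certification threshold, and (iii)~is a legitimate probability distribution satisfying \eqref{ONS_new}. The existence of \emph{some} box with $x$-independent proper marginals but $x$-dependent $n$-party correlation is already the content of the displayed construction preceding the corollary; the remaining work is only to select its two branches so that the induced change in the certified entropy (or key rate) is nonzero, which --- since these quantities depend continuously on the $n$-party correlation and the ``good'' branch can be prepared at or near the certification boundary --- follows without further subtlety. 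As with Theorems~\ref{thm:MAIN} and~\ref{thm:jam}, I would relegate this explicit choice to the Supplementary Information.
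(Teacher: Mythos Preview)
Your proposal is correct and follows essentially the same line as the paper. In the paper the corollary is not given a separate formal proof; it is stated as an immediate consequence of the paragraph preceding it, which observes that in the configuration of Theorem~\ref{thm:jam} an adversary at $p$ can freely change the $n$-party correlation while leaving every $k$-party marginal ($k<n$) untouched, that Theorems~\ref{thm:MAIN} and~\ref{thm:jam} together guarantee this respects the operational no-signalling constraints, and that the honest parties must therefore all meet to detect the tampering. Your three steps unpack exactly this argument, with more care about the certification figure of merit and about the explicit construction of the jamming box than the paper itself supplies.
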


\medskip

In Minkowski spacetime a gathering point exists for any collection of spacetime points (see Fig. \ref{fig:BH} \textbf{a)}), but, interestingly, it should be noted that in a general spacetime this need not be the case. An example is provided by the Schwarzschild black hole spacetime. Because of the spacetime singularity, there exist tuples of points, which do not have a gathering point --- see Fig. \ref{fig:BH} \textbf{b)}. This means that they are \emph{not} operationally separated from any other point in spacetime. It leads to the following result:

\begin{Prop}\label{prop:BH}
   In black hole spacetimes there exist spacelike separated SRVs, such that any agent can change correlations between them, without effectuating superluminal signalling. 
\end{Prop}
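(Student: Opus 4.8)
The plan is to exhibit, in the maximally extended Schwarzschild spacetime, two spacelike separated spacetime points $q_1,q_2$ that possess \emph{no} gathering point at all, and then invoke the no-signalling constraints \eqref{ONS_new} together with Theorem \ref{thm:MAIN}. The key geometric observation is the one already flagged in the text and in Fig.~\ref{fig:BH}~\textbf{b)}: if $q_1$ and $q_2$ both lie inside the event horizon, sufficiently deep that their future light-cones each terminate on the curvature singularity $r=0$ before they can intersect, then there is no point $Q$ with $q_1 \preceq Q$ and $q_2 \preceq Q$. Concretely, I would work in ingoing Eddington--Finkelstein coordinates $(v,r,\theta,\phi)$, place $q_1$ and $q_2$ at the same small radius $r_0 < 2M$ but at advanced times $v_1,v_2$ and angular positions chosen so that they are spacelike separated, and estimate the affine length of causal curves emanating from each: since $r$ is strictly decreasing along any future-directed causal curve in the interior and reaches $0$ in bounded proper time, the causal futures $R^+(q_1)$ and $R^+(q_2)$ can be made disjoint by taking $r_0$ small enough and the angular separation close to antipodal. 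This is the step I expect to be the main obstacle — making the ``futures do not meet'' claim fully rigorous requires a careful estimate on how far a causal curve can spread in the angular directions before hitting $r=0$, which in Schwarzschild amounts to bounding $\int d\phi$ along such curves using the metric coefficients; it is standard but is the only place real work is needed.

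Once the geometric lemma is in place, the argument is immediate. Take the tuple $\bq = (q_1,q_2)$. By construction it has no gathering point, hence by Definition \ref{opsepDef} it is \emph{not} operationally separated from $\bp^F$ for \emph{any} nonempty $F$ and any choice of input points $\bp$ — the operational separation condition requires the existence of a gathering point to begin with. Therefore the no-signalling constraints \eqref{ONS_new} impose no restriction on the two-point statistics $\PboxR{(a_1,q_1),(a_2,q_2)}{(x,p)}{A_1 A_2 \vert X}$ as a function of an input $X$ placed at \emph{any} point $p$ whatsoever (in particular, one may place $p$ in the past of $q_1$, or spacelike to both, or even in the black-hole interior between them). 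At the same time, each single point $q_j$ \emph{does} have a gathering point (itself), so for a generic choice of $p$ spacelike separated from $q_1$ and from $q_2$ the singletons $(q_1)$ and $(q_2)$ are operationally separated from $p$, and \eqref{ONS_new} forces the marginals $\PboxR{(a_j,q_j)}{(x,p)}{A_j \vert X}$ to be independent of $x$. Thus correlations with exactly the jamming profile \eqref{jam0}--\eqref{jam1} relative to such a $p$ are admissible.

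It remains to conclude that this constitutes jamming ``by any agent'' without superluminal signalling. For the ``any agent'' clause I would note that $p$ can be chosen freely — one only needs $p$ spacelike to $q_1$ and to $q_2$ (so the marginals are protected) while $\bq$ still has no gathering point (automatic, since that property of $\bq$ is independent of $p$); in Schwarzschild such $p$ exist both inside and outside the horizon, giving the ``under and across the event horizon'' statement of the abstract. For the ``without superluminal signalling'' clause I simply apply Theorem \ref{thm:MAIN}: any collection of probabilities \eqref{box} respecting \eqref{ONS_new} — and we have just argued that a jamming profile of the above kind does — exhibits no operational superluminal signalling. I would close by remarking that the phenomenon is genuinely global: it is the presence of the singularity, not the local causal structure near $q_1,q_2$, that makes a gathering point unavailable, echoing the earlier comment that in curved spacetime the implication of Theorem \ref{thm:MAIN} holds only locally.
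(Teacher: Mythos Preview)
Your proposal is correct and follows the same logical skeleton as the paper: exhibit a pair $(q_1,q_2)$ with empty common future (hence no gathering point), observe via Definition~\ref{opsepDef} that this pair is then not operationally separated from \emph{any} $p$, so the constraints~\eqref{ONS_new} leave the joint statistics unconstrained, and conclude by Theorem~\ref{thm:MAIN}.

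The only real difference is the geometric configuration. The paper's argument is simply a reading of the Penrose diagram in Fig.~\ref{fig:BH}~\textbf{b)}: one point $q_3$ is placed inside the horizon close to the singularity and another ($q_1$ or $q_2$) sits across the horizon, so that any causal curve from the outer point reaches the interior only after $R^+(q_3)$ has already terminated on $r=0$. This is a two-dimensional conformal-diagram fact and needs no angular estimate. Your configuration --- two antipodal points at the same small $r_0<2M$ --- also works, but, as you correctly flag, it forces you to bound the angular spread $\int d\phi$ along interior causal curves before they hit the singularity. That bound is obtainable, but it is extra work you can avoid by choosing the pair as in the figure. Your version has the virtue of keeping both SRVs strictly under the horizon; the paper's version directly delivers the ``across the event horizon'' case advertised in the abstract.
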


\begin{figure}[H]
\begin{center}
\resizebox{0.6\textwidth}{!}{\includegraphics[scale=1]{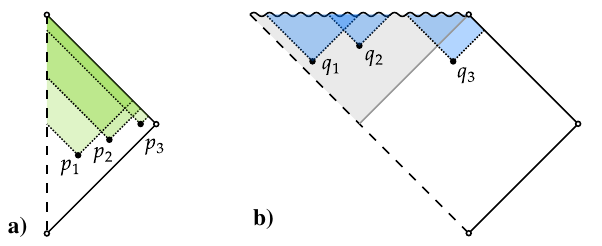}}
\caption{\label{fig:BH} Conformal diagrams for the Minkowski, \textbf{a)}, and Schwarzschild, \textbf{b)}, spacetimes (see e.g. \cite{ChruscielBH}). \textbf{a)} In Minkowski spacetime any tuple of points have a non-empty common future, and hence a gathering point. \textbf{b)} The tuple $(q_1,q_2)$ has a gathering point, but this is not the case for the tuples $(q_1,q_3)$, $(q_2,q_3)$ and $(q_1,q_2,q_3)$.}
\end{center}
\end{figure}

\subsection{Special principle of relativity and causal loops}

It is commonly believed that operational superluminal signalling leads to causal loops and the ensuing logical paradoxes (see e.g. \cite{PawelRaviCausality,PRA2020}). We now clarify this intuition by leveraging the \emph{special principle of relativity}, which says that the laws of physics ought to be the same in all inertial reference frames \cite{Einstein_relativity}. This demand is implemented in our formalism as follows: Assume that $\M$ is the Minkowski spacetime and let $L: \M \to \M$ denote an (active) Poincar\'e transformation. Using the vector notation we write $L(\bp) = \{L(p_i)\}_{i=1}^n$.

\begin{Def}
    We say that a collection of probabilities \eqref{box} in Minkowski spacetime \emph{satisfies the special principle of relativity} when
\begin{align}
& %\text{ If }
\PboxR{(\ba,\bq)}{(\bx,\bp)}{\bA|\bX} = \PboxR{(\ba,\bq)}{(\bx',\bp)}{\bA|\bX} \notag \\
& \text{ if and only if} \label{Poincare}
\\
& \PboxR{(\ba,L(\bq))}{(\bx,L(\bp))}{\bA|\bX} = \PboxR{(\ba,L(\bq))}{(\bx',L(\bp))}{\bA|\bX}, \notag
\end{align}
for any Poincar\'e transformation $L$.
\end{Def}

Suppose that \eqref{Poincare} is violated for some transformation $L$. Then, there exist two inertial frames linked by $L$ that can be \emph{operationally} distinguished in a suitable experiment. 
In other words, if information transfer (sub- or superluminal) is possible in one frame, then it is possible in any other frame. Notwithstanding, the probabilities of signal detection are, in general, frame-dependent.

In this context, Theorem \eqref{thm:MAIN} has an immediate consequence:
\begin{Cor}\label{cor:loops}
    Let $\M$ be the Minkowski spacetime. Then, the violation of condition \eqref{ONS_new} leads either to logical paradoxes or to the violation of the special principle of relativity.
\end{Cor}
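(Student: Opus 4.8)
\textbf{Proof proposal for Corollary \ref{cor:loops}.}

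The plan is to argue by contraposition and dichotomy: assume condition \eqref{ONS_new} is violated in Minkowski spacetime, and show that if the special principle of relativity \eqref{Poincare} is \emph{not} violated, then a genuine causal loop — and hence a logical paradox — must arise. First I would invoke Theorem \ref{thm:MAIN}: a violation of \eqref{ONS_new} means that there exist index sets $F,G$, with $\bq^G$ operationally separated from $\bp^F$, a gathering point $Q$ for $\bq^G$ lying outside the future of every $p_i$ with $i\in F$, and an SRV $(B,Q)$ whose statistics depend on the freely chosen input $X_j=x_j$ versus $x_j'$ at some $p_j$ with $j\in F$, where $p_j \nprec Q$. This is precisely an operational superluminal signalling channel from the spacetime point $p_j$ to the point $Q$, which is spacelike or past-related to $p_j$.

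The core step is then the standard ``tachyonic antitelephone'' construction, now made rigorous within the formalism. Since $p_j \nprec Q$ in Minkowski spacetime, the displacement vector $Q - p_j$ is either spacelike or past-directed. If it is past-directed or null-past, one already has signalling into one's own past along a single frame, giving an immediate contradiction with the free-choice assumption (the chosen bit is correlated with detection statistics in its own past). If $Q-p_j$ is spacelike, I would pick an inertial frame — equivalently, a Poincaré transformation $L$ — in which the image of the signalling channel runs ``backwards in time'', i.e. $L(Q)$ precedes $L(p_j)$ in the standard time coordinate; such an $L$ exists because any spacelike separation can be reversed in time order by a suitable boost. By the special principle of relativity \eqref{Poincare}, the signalling channel persists in the transformed configuration $L(\bp), L(\bq)$: the dependence of the statistics of $(B,L(Q))$ on the input at $L(p_j)$ is unchanged. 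Concatenating the original channel ($p_j \to Q$) with a boosted copy of it (a channel from $L^{-1}$-appropriate points so that the output lands in the causal past of $p_j$) yields a protocol by which the agent at $p_j$ transmits a bit to a spacetime region in the \emph{causal past} of $p_j$ itself; feeding that bit back to $p_j$ along a timelike curve and letting the agent choose $x_j \neq x_j'$ to contradict the received bit produces the grandfather-type logical paradox.

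The main obstacle I anticipate is making the concatenation step fully precise in the operational language: one must exhibit a single collection of probabilities \eqref{box} (for a possibly enlarged party set) that simultaneously realizes both the original channel and its boosted copy, with the output of the first serving as — or being correlated with — the freely chosen input of the second, and with the final output SRV located in the causal past of $p_j$. This requires care that the free-selection (intervention) assumption is applied consistently and that composing the two boxes does not secretly reintroduce a gathering point inside the relevant future; invoking \eqref{Poincare} to guarantee the second channel exists with the needed orientation, and Theorem \ref{thm:MAIN} to certify that each piece really is a superluminal channel, are the two places where the hypotheses enter essentially. Once the loop is assembled, the paradox is the familiar one: a binary variable forced to differ from itself. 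Hence either \eqref{Poincare} fails for some $L$ (operational breaking of Poincaré symmetry) or we reach a logical contradiction, which is the claimed dichotomy.
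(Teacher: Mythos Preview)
Your proposal is correct and follows essentially the same route as the paper: violation of \eqref{ONS_new} yields an operational superluminal channel via Theorem \ref{thm:MAIN}, Poincar\'e invariance \eqref{Poincare} guarantees a boosted copy of the channel, and concatenating the two produces a closed loop on which free selection leads to a grandfather-type contradiction. The paper's argument is itself informal on exactly the point you flag as the main obstacle --- it simply asserts the existence of a Poincar\'e transformation $L$ with $q_1,q_2 \prec L(p)$ and $L(q_1),L(q_2) \prec p$, so that the two channels interlock, and then says ``the agent can design the setup in such a way that\ldots'' without spelling out the enlarged box --- so your level of rigor matches the paper's.
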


If condition \eqref{ONS_new} is violated in one inertial frame, then --- by the special principle of relativity --- it must be violated in any other inertial frame. Consequently, one can construct an operational causal loop, which can be exploited by an agent to arrive at a self-contradiction --- see Fig. \ref{fig:loop}. Indeed, suppose that constraint \eqref{ONS_new} is violated in such a way that an agent can influence the joint statistics of SRVs $(A_1,q_1)$, $(A_2,q_2)$ by intervening on an SRV $(X,p)$, as in Eq. \eqref{jam1}. 
There exists a Poincar\'e transformation $L$ with $p' = L(p)$, $q'_i = L(q_i)$, such that $q_1, q_2 \prec p'$ and $q_1', q_2' \prec p$. If the special principle of relativity \eqref{Poincare} holds, then
\begin{align}\label{jam1p}
    \PboxR{(a_1,q_1'),(a_2,q_2')}{(x,p')}{A_1 A_2 | X} \neq \PboxR{(a_1,q_1'),(a_2,q_2')}{(x',p')}{A_1 A_2 | X}.
\end{align}
Now, the agent can design the setup in such a way that his free selection of the value $x$ of the SRV $(X,p)$ yields a value $(X=x',Q')$ with non-zero probability. But since $Q'$ is in his past, this contradicts his free selection.

\begin{figure}[H]
\begin{center}
\resizebox{0.4\textwidth}{!}{\includegraphics[scale=1]{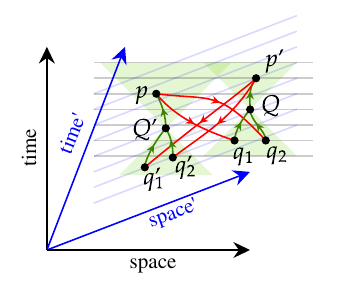}} %\qquad \resizebox{0.4\textwidth}{!}{\includegraphics[scale=1]{fig_loop2.pdf}}
\caption{\label{fig:loop} A spacetime diagram illustrating an operational causal loop exploiting a violation of no-signalling constraints \eqref{ONS_new} and the special principle of relativity \eqref{Poincare}. The arrows represent causal influences between SRVs, subluminal (green) and superluminal (red). See text for further description.}
\end{center}
\end{figure}

This result shows that under natural symmetry assumptions the no-signalling principle does rule out the existence of causal loops in Minkowski spacetime, in contrast to some recent claims \cite{Loops_PRA,Loops_PRL,ColbeckVilasini25} --- see below for a detailed study. However, two comments are in order:

%{\color{blue}This might go to Discussion}

Firstly, recall that any relativistic spacetime can locally be approximated to the Minkowski spacetime. Consequently, Cor.~\ref{cor:loops} applies \emph{locally} in a `sufficiently small' neighbourghood of any free-falling laboratory, modelled with the help of Fermi normal coordinates \cite{Fermi_coordinates}, in any spacetime $\M$. However, the Poincar\'e group is not a group of symmetries of a general relativistic spacetime, so Cor.~\ref{cor:loops} is not expected to hold \emph{globally} in a curved spacetime.

Secondly, Cor.~\ref{cor:loops} forbids the existence of \emph{operational} causal loops, which lead to logical inconsistencies. But it does not say anything about the mechanisms underlying the operational probabilities \eqref{box}. One can, for instance, imagine a theory with cyclic causal influences at the Planck scale, which cannot be operationally exploited by any macroscopic agent. In a similar vein, the maximal extension of the Kerr spacetime contains closed timelike curves under the inner horizon \cite{Carter_CTC}, but these are inaccessible to any agent in the domain of outer communications. Such considerations fall under the slogan of the \emph{self-consistency principle}, which has a long history in the context of closed timelike curves in general relativity \cite{HawkingChronology,Novikov2,ThorneCTC} and quantum information processing \cite{DeutschCTC}.

\subsection{Case study 1: Jamming with the one-time pad}

In \cite{Loops_PRL} the authors  put forward an interesting case. Namely, it was claimed that if one only forbids superluminal signalling while allowing for the possibility of superluminal causal influences, then there arises the mathematical possibility of closed causal loops embeddable in $(1+1)$-Minkowski spacetime that nevertheless do not lead to logical paradoxes. This was done by means of a general framework for causal models developed by the authors in \cite{Loops_PRA}, and by a concrete construction of an ``operationally detectable causal loop'' embeddable in a $(1+1)$-Minkowski spacetime based on a fine-tuned classical causal model.

We reexamine this claim in light of the operational no-signalling conditions of Def. \ref{opcausDef}, which were shown to be necessary and sufficient to rule out operational signalling in Thm. \ref{thm:MAIN}. This is done in detail in SI, while here we show that the causal model $\Gl$ presented in \cite{Loops_PRL} can be safely embedded in Minkowski spacetime (of any dimension), and without the fine-tuning contraints on agents' locations. The consistent embedding is based on the fact that agents' input (`intervention') SRVs are in the \emph{strict} past of the corresponding output SRVs. Concretely, suppose that we have the six SRVs located as in Fig. \ref{fig:CV}.

\begin{figure}[H]
\begin{center}
\resizebox{0.5\textwidth}{!}{\includegraphics[scale=1]{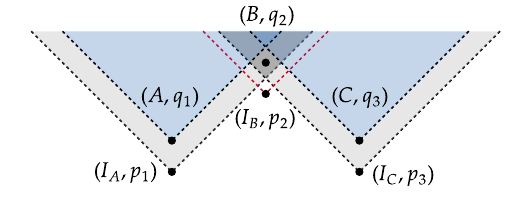}} 
\caption{\label{fig:CV}A consistent embedding of the causal model $\Gl$ from \cite{Loops_PRL} in Minkowski spacetime. See text for the description.}
\end{center}
\end{figure}

The operational no-signalling constraints \eqref{ONS_new} allow for \emph{both}
\begin{align}
    & \PboxR{(b,q_2)}{(\doo(a),p_1),(\doo(c),p_3)}{B|I_A,I_C} \neq \PboxR{(b,q_2)}{(\idle,p_1),(\idle,p_3)}{B|I_A,I_C}, \\
  \text{and } \quad  & \PboxR{(a,q_1),(c,q_3)}{(\doo(b),p_2)}{AC|I_B} \neq \PboxR{(a,q_1),(c,q_3)}{(\idle,p_2)}{AC|I_B},
\end{align}
for some $a,b,c$ and (possibly other) $a',b',c'$. The first inequality means that Alice and Charlie can (jointly) influence Bob's detection statistics --- this is possible, because Bob's readout point, $q_2$, is in the future of both Alice's and Charlie's intervention points, $p_1$ and $p_2$. The second inequality implements the possibility of Bob's jamming of correlations between Alice's and Charlies's SRVs, $(A,q_1)$ and $(C,q_3)$, respectively. This is admissible, because the tuple $(q_1,q_3)$ is \emph{not} operationally separated from $p_2$.

However, in the degenerate case, $p_i = q_i$, considered in \cite{Loops_PRL} the tuple $(q_1,q_3)$ \emph{is} operationally separated from $p_2$. Consequently, the operational no-signalling constraints \eqref{ONS_new} are violated and a logical paradox can occur, unless one denies all the agents' freedom to intervene or not. See SI for the details.

\subsection{Case study 2: Monogamy relations and jamming mechanisms}

%\section{Monogamy of relativistically causal correlations and physicality of jamming correlations}
In \cite{Monogamy_HR}, an interesting study was undertaken on the monogamy relations of the correlations in relativistic causal theories obeying the constraints from \cite{PawelRaviCausality} (a similar set of constraints to those in Thm. \ref{thm:MAIN}). Specifically, a particular ``triangle'' spacetime configuration was considered \cite[Fig. 2]{Monogamy_HR} with six spacelike separated SRVs $A, B, C, X, Y, Z$ where $X, Y, Z$ are to be treated as inputs and $A, B, C$ are to be treated as outputs and the relativistic causality constraints were given as
\begin{eqnarray}
\label{eq:relcaus-sixconf}
    P_{A,B|X,Y,Z}(a,b|x,y,z) &=& P_{A,B|Z}(a,b|z), \nonumber \\
    P_{A,C|X,Y,Z}(a,c|x,y,z) &=& P_{A,C|Y}(a,c|y), \nonumber \\
    P_{B,C|X,Y,Z}(b,c|x,y,z) &=& P_{B,C|X}(b,c|x), \nonumber \\
    P_{A|X,Y,Z}(a|x,y,z) &=& P_{A}(a), \nonumber \\
    P_{B|X,Y,Z}(b|x,y,z) &=& P_{B}(b), \nonumber \\
    P_{C|X,Y,Z}(c|x,y,z) &=& P_{C}(c).
\end{eqnarray}
That is, in this case $Z$ is in principle able to jam the correlations between $A$ and $B$, similarly $Y$ is in principle able to jam the correlations between $A$ and $C$, and similarly $X$ is in principle able to jam the correlations between $B$ and $C$, while no pairwise affects relations (i.e. point-to-point signalling) exists for any pair of variables.

In the case when all variables are binary, the following interesting entropic monogamy relation was derived \cite{Monogamy_HR} for all correlations obeying \eqref{eq:relcaus-sixconf}:
\begin{eqnarray}
\label{eq:entropic-mono}
    I(AB:Z) + I(AC:Y) + I(BC:X) \leq 1.
\end{eqnarray}
Here $I$ denotes the mutual information defined as $I(M:N) = H(M) + H(N) - H(MN)$ with $H(M) = - \sum_{m} P(m) \log P(m)$ being the Shannon entropy of $M$. The monogamy relation \eqref{eq:entropic-mono} says that when $Z$ has maximal information about $A, B$ such as for instance when $Z$ is used to jam the correlations between $A$ and $B$ as $Z = A \oplus B$, then $I(AC:Y) = I(BC:X) = 0$, i.e., $Y$ is independent of $A,C$ and $X$ is independent of $B,C$. 

Now, an interesting and surprising conclusion was drawn from the existence of monogamy relations such as \eqref{eq:entropic-mono}. Namely, it was noted that the monogamy relations mean that the relativistically causal correlations and thus also potential mechanisms leading to them have to be highly nonlocal. Furthermore, while the correlations themselves do not by definition feature superluminal transfer of information, superluminal signalling may occur if one assumes that there is a jamming mechanism that an agent can turn on and off that leads to these relativistically causal effects. Specifically, turning on or off a jamming device (say $Y$) in this scenario does not directly decide whether jamming occurs (e.g. $Y = B \oplus C$ is set). Instead, the author of \cite{Monogamy_HR} postulates that a highly nonlocal mechanism must exist that takes into account all potential parties that may attempt to jam. In particular, seen from this viewpoint, the author claims that this puts into question the explanation of these correlations via a jamming mechanism and is also problematic for experimentalists probing a given experimental setup of interest since some unknown space-like separated variables may control how they operate their experiment. In conclusion, the \cite{Monogamy_HR} claims to refute suggestions for physical mechanisms that could lead to such relativistically causal correlations and thereby questions the possibility and physicality of jamming. 

In what follows, we controvert this claim and proceed to clarify why the relativistically causal correlations of \cite{PawelRaviCausality} and Thm. \ref{thm:MAIN} may nevertheless be physical. In particular, we point out that such ``monogamy'' or tradeoff relations exist in both signalling and no-signalling theories.

We now illustrate the incompatibility of certain constraints on correlations by re-analysing the ``compass setup'' scenario presented in \cite[Fig. 3 a)]{Monogamy_HR}, which is a sub-scenario. We have four input SRVs $(X,p_1)$, $(X_m,p_1)$, $(Y,p_2)$, $(Y_m,p_2)$ and three output SRVs $(A,q_1)$, $(B,q_1)$, $(C,q_3)$. All points, $p_1, p_2, q_1, q_2, q_3$ are spacelike separated, and hence mutually operationally separated. Furthermore, the tuple $(q_1,q_2)$ is operationally separated from $p_2$, but not from $p_1$, while $(q_2,q_3)$ is operationally separated from $p_1$, but not from $p_2$. Consequently, the operational no-signalling constraints \eqref{ONS_new} read (for sake of readability we omit the spacetime points):
\begin{equation}\label{compass}
\begin{aligned}
 \PboxR{a,b}{x,x_m,y,y_m}{A,B|X,X_m,Y,Y_m} & = \PboxR{a,b}{x,x_m,y',y_m'}{A,B|X,X_m,Y,Y_m},  \\
 \PboxR{b,c}{x,x_m,y,y_m}{B,C|X,X_m,Y,Y_m} & = \PboxR{b,c}{x',x_m',y,y_m}{B,C|X,X_m,Y,Y_m}, \\
 \PboxR{a}{x,x_m,y,y_m}{A|X,X_m,Y,Y_m} & = \PboxR{a}{x',x_m',y',y_m'}{A|X,X_m,Y,Y_m},  \\
 \PboxR{b}{x,x_m,y,y_m}{B|X,X_m,Y,Y_m} & = \PboxR{b}{x',x_m',y',y_m'}{B|X,X_m,Y,Y_m},  \\
 \PboxR{c}{x,x_m,y,y_m}{C|X,X_m,Y,Y_m} & = \PboxR{c}{x',x_m',y',y_m'}{C|X,X_m,Y,Y_m},
\end{aligned}
\end{equation}
for all $a, b, c, x, x', x_m, x_m', y, y', y_m, y_m'$. Now, the author of \cite{Monogamy_HR} assumes that the jamming mechanism, which can be turned on ($X_m = 1$) by Xavier, correlates the outputs of Alice and Bob through Xavier's input, $A \oplus B = X$. If the jamming mechanism is off  ($X_m=0$), then $A$ and $B$ are independent. The same jamming mechanism, controlled by Yanina's RV, $Y_m$, governs the correlations between $B$ and $C$ through Yanina's input $Y$. These assumptions are implemented as
\begin{align}\label{jam_X}
   \PboxR{a,b,c}{x,1,y,y_m}{A,B,C|X,X_m,Y,Y_m} & = \delta_{a \oplus b,x} \big( \lambda \PboxR{c}{0,1,y,y_m}{C|X,X_m,Y,Y_m} \delta_{x,0} + (1- \lambda) \PboxR{c}{1,1,y,y_m}{C|X,X_m,Y,Y_m} \delta_{x,1} \big) , \\% = \delta_{a \oplus b,x} \cdot \PboxR{c}{}{C}, \\
   \PboxR{a,b,c}{x,x_m,y,1}{A,B,C|X,X_m,Y,Y_m} & = \delta_{b \oplus c,y} \big( \mu \PboxR{a}{x,x_m,0,1}{A|X,X_m,Y,Y_m} \delta_{y,0} + (1- \mu) \PboxR{a}{x,x_m,1,1}{A|X,X_m,Y,Y_m} \delta_{y,1} \big),\label{jam_Y} % = \delta_{b \oplus c,x} \cdot \PboxR{a}{}{A}, 
\end{align}
for all $a, b, c, x, x_m, y, y_m$. %The rightmost equalities in formulas \eqref{jam_m} follow from the operational no-signalling constraints \eqref{comapss}.
But these demands are contradictory, under the assumption of operational no-signalling constraints \eqref{compass}. Indeed, assumption \eqref{jam_X} implies that
\begin{align}\label{jam_X2}
   \PboxR{a,b,c}{0,1,y,1}{A,B,C|X,X_m,Y,Y_m}  & = \delta_{a \oplus b,0} \cdot \lambda \cdot \PboxR{c}{0,1,y,1}{C|X,X_m,Y,Y_m},
\end{align}
while assumption \eqref{jam_Y}, yields
\begin{align}
   \PboxR{a,b,c}{0,1,0,1}{A,B,C|X,X_m,Y,Y_m}  & = \delta_{b \oplus c,0} \cdot \mu \cdot \PboxR{a}{0,1,0,1}{A|X,X_m,Y,Y_m}, \label{jam_Y2a} \\
   \PboxR{a,b,c}{0,1,1,1}{A,B,C|X,X_m,Y,Y_m}  & = \delta_{b \oplus c,1} \cdot (1-\mu) \cdot \PboxR{a}{0,1,1,1}{A|X,X_m,Y,Y_m}. \label{jam_Y2b}
\end{align}
The operational no-signalling constraints \eqref{compass} together with \eqref{jam_X2} imply that Eqs. \eqref{jam_Y2a} and \eqref{jam_Y2b} must be equal for all $a,b,c$, hence
\begin{align}
 \delta_{b \oplus c,0} \cdot \mu = \delta_{b \oplus c,1} \cdot (1-\mu).
\end{align}
But this cannot hold for all $b,c$, because $b,c =0$ requires $\mu=0$, while $b =0, c =1$ necessitates $\mu=1$.

This analysis simply shows that the assumed jamming mechanism \eqref{compass} is incompatible with operational no-signalling constraints \eqref{ONS_new} in the particular ``compass'' setting of spacetime points. It does not say that ``by choosing $Y_m$ Yanina can restrict Xavier’s choice of $X_m$'', as claimed in \cite{Monogamy_HR} and it clearly does not show that \emph{any} jamming mechanism would lead to superluminal signalling. 

Observe that to arrive at the above conclusions we did not invoke the monogamy relation \eqref{eq:entropic-mono}. We now proceed to clarify that in general such relations are not directly related to the spacetime structure and no-signalling constraints.

The monogamy of the CHSH Bell inequality in no-signalling theories is well known \cite{PhysRevLett.113.210403} 
%\ME{Reference missing here!}
\begin{eqnarray}
\label{eq:monochsh-nosig}
    \omega(CHSH)_{AB} + \omega(CHSH)_{AC} \leq \frac{3}{2},
\end{eqnarray}
where we have considered the standard two-party CHSH expression with classical value $3/4$. That is, in general no-signalling theories (obeying the usual no-signalling constraints), the observation of a super-classical ($> \frac{3}{4}$) score by one pair of players $A,B$ precludes the observation of super-classical score by the pair $A, C$ in a configuration in which all three players are spacelike separated.

In the SI\ref{app:XORmono-signalling}, we prove a technical result showing that monogamy relations exist for general two-party correlation Bell inequalities (generalisations of the CHSH inequality) in signalling theories as well. The intuition is simple --- the winning conditions in general two-player Bell expressions are in some cases in conflict for different pairs of players conducting the same Bell experiment. So that, again even in signalling theories, for certain non-local games or two-player Bell inequalities, the observation of a super-classical score by some pairs of players precludes the observation of super-classical score by other pairs. 

In both signalling and no-signalling theories, the conclusion is the same --- the existence of such a monogamy relation cannot be taken as evidence that the theories are unphysical. To be precise, the monogamy of CHSH inequality in no-signalling theories in \eqref{eq:monochsh-nosig} is in exact analogy with the monogamy relation from \cite{Monogamy_HR} in that one may claim that what score an experimentalist observes for $A, B$  is tied to a possibly unknown space-like separated variable $C$. However, this does not make ``causal and experimental reasoning about setups that allow such correlations challenging'' \cite{Monogamy_HR}. Instead, the monogamy of CHSH inequality is a physically observed effect that lies at the very foundation of quantum cryptography.

After all, quantum theory is an evidently physical no-signalling theory with features such as entanglement that may appear to be non-local but are nevertheless very much physical. One may question the mechanisms leading to apparently nonlocal features such as entanglement but this does not on its own rule out their physicality. In conclusion, while we acknowledge the existence of monogamy relations in correlations obeying the relativistically causal constraints from \cite{PawelRaviCausality} and Thm. \ref{thm:MAIN}, there is no justification to the claim that such monogamy relations interdict the physicality of such relativistically causal theories.

\section{Discussion}

Several questions of primary importance both for fundamental physics and quantum information processing in spacetime have been unsettled: (1) The formulation of the exact constraints imposed by relativistic causality in different spacetimes \cite{PR_box, PawelRaviCausality}; (2) their implications for the security of device-independent protocols against general adversaries limited only by the no-signalling principle \cite{NS_crypto, RC_crypto}; (3) and the possible physicality of jamming of nonlocal correlations, which respect the no-superluminal-signalling constraints \cite{Monogamy_HR}.

In this paper, we have constructed a unified framework to handle both temporal and nonlocal correlations and derived the operational no-signalling constraints in general relativistic spacetimes. It is worth noting that these constraints can also be applied in other spacetime models, e.g. Newtonian spacetime \cite{Sanchez_Newton}, spacetimes with a preferred foliation \cite{Causality_foliation}, Finslerian spacetimes \cite{Minguzzi_Finsler}, causal sets \cite{CausalSet_PRL,CausalSet_Rev} or certain noncommutative spacetimes \cite{CQG2013,UNIV2017}. This is because the formulation of constraints \eqref{ONS_new} requires only suitable notions of `future' and `past', and not the full structure of a smooth spacetime manifold. On the other hand, in order to rule out causal loops and the ensuing logical paradoxes, the no-signalling constraints have to be complemented by \emph{some} assumptions about the fundamental symmetries of spacetime. In the Minkowski spacetime this is the familiar Poincar\'e symmetry, but in other spacetimes (relativistic or more general) it is not immediately clear what the right operational symmetries should be. This insight opens a new avenue in studying nonlocal correlations and their possible physical realisations.

We have applied our framework to clarify how recently provided schemes to embed causal loops in Minkowski spacetime \cite{Loops_PRL} would lead to a violation of the operational no-signalling constraints and thence to logical paradoxes, under standard assumptions on the freedom of an agent to intervene or not. We have also refuted recent arguments against the physicality of jamming of nonlocal correlations based on the monogamy relations which arise in such theories \cite{Monogamy_HR} by explicitly deriving monogamy relations even in signalling theories and showing that such monogamy relations should be interpreted as frustrations caused by opposing constraints rather than limitations on the underlying physics. In terms of fundamental physics, our findings motivate the search for possible physical mechanisms for jamming correlations.

In terms of quantum information processing, the established operational no-signalling constraints expose possible attacks by an adversary only constrained by the laws of relativity on device-independent protocols, such as for random number generation and key distribution, and reinforce the importance of implementing such protocols in carefully selected spacetime configurations. The implications for other protocols, such as for relativistic bit commitment \cite{PhysRevLett.115.030502} and quantum position verification \cite{Kavuri:25}, are interesting avenues for future research.

\section*{Acknowledgments}
 M.E., P.H and T.M. acknowledge support by the National Science Centre in Poland under the research grant Maestro (2021/42/A/ST2/00356). R.R. acknowledges support from the General Research Fund (GRF) Grant No. 17211122, and the Research Impact Fund (RIF) Grant No. R7035-21.

\section{Methods}

\subsection{Relativistic spacetime}

A relativistic spacetime $\M$ is a four-dimensional time-oriented (connected) manifold equipped with a Lorentzian metric %$g$ 
(see e.g. \cite{Wald}) which, among other things, endows $\M$ with various causal relations (for a comprehensive review of causality theory, see \cite{MS08,MingRev}). %A smooth curve $\gamma: [0,1] \to \M$ is \emph{causal} if its tangent vector $\gamma'(t)$ is future-directed causal. 
More concretely, a point $p$ \emph{strictly causally precedes} $q$, what is denoted $p \prec q$, if there exists a piecewise smooth future-directed causal curve going from $p$ to $q$. Notice that this relation is transitive but irreflexive ($p \not\preceq p$), unlike the more commonly used causal precedence relation $p \preceq q$ defined as $p \prec q$ or $p = q$. In the current paper, however, both relations play a role (most notably in Definition \ref{opsepDef}). More abstractly, one can define the notions of future and past in any spacetime model equipped with a binary relation $\prec$.

The standard causal future of a point $p$ reads $J^+(p) = \{ q \in \M \, | \, p \preceq q\}$, whereas its \emph{strict} causal future is $R^+(p) = \{ q \in \M \, | \, p \prec q\} = J^+(p) \setminus \{p\}$. The causal past $J^-(p)$ and the strict causal past $R^-(p)$ are defined analogously.

The relation $\preceq$ is a partial order (i.e. a reflexive, antisymmetric and transitive relation) if and only if the spacetime $\M$ does not contain closed causal curves, that is there exist no points $p \neq q \in \M$ such that $p \preceq q \preceq p$. One can classify spacetimes in `ladder-like' structure according to their causal properties \cite{MS08,MingRev}, the top level being known as \emph{globally hyperbolic spacetimes}. The latter include the Minkowski spacetime, as well as i.a. Schwarzschild and FLRW spacetimes \cite{Wald}. Every spacetime is locally hyperbolic \cite{Penrose1972}.

Any experiment is performed with some physical device operating in the spacetime region $V \times T$, where $V$ is the volume of the device and $T$ its time-lapse of operation (see Fig. \ref{fig:box}). The sets $V$ and $T$ are defined in a chosen local coordinate chart $\phi: U \subset \M \to \R^4$, but the spacetime region associated with the device, $\K \vc \phi^{-1}(V \times T) \subset \M$, is a chart-independent object.

\begin{figure}[H]
\begin{center}
\resizebox{0.35\textwidth}{!}{\includegraphics[scale=1]{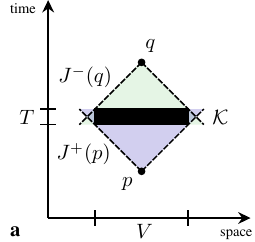}} \qquad
%\resizebox{0.5\textwidth}{!}{\includegraphics[scale=1]{box2.pdf}}
\caption{\label{fig:box} Spacetime diagram for a basic experiment with a single input and a single output. The `input point' $p$ marks the agent's decision on the choice of input and the `output point' $q$ marks the gathering of experimental information from the device. The cones, $J^+(p)$ and $J^-(q)$, depict the future and past of the events $p$ and $q$, respectively. Note that even if we assumed the measurement to be instantaneous, i.e. $T = \{t_0\}$, but the detector has a non-pointlike volume $V$, the points $p$ and $q$ would not coincide. %In practice the time-distance $\tau(p \to q)$ is much larger than the minimal time determined by the volume.
}
\end{center}
\end{figure}

\subsection{Operational separation relation}

The operational separation relation introduced in Def. \ref{opsepDef} plays a pivotal role in the operational no-signalling constraints \eqref{ONS_new}. It could be seen as an asymmetric extension of the standard spacelike separation relation between pairs of points in a spacetime~$\M$.

Fig. \ref{fig:ocaus} provides a more detailed illustration of its features: Every single spacetime point $q_1, q_2, q_3$ and $q_4$ is operationally separated from the spacetime point $p$. Note that $p$ is actually spacelike-separated from $q_1,q_2$ and $q_3$, while $q_4$ is in the past of $p$. The pair $(q_1,q_3)$ is operationally separated from $p$, because there exists a point $Q$, which lies in the common future of $q_1$ and $q_3$, but outside of the future of $p$ (red region). Similarly, the pair $(q_2,q_4)$ is operationally separated from $p$, because there exists a point $Q'$ in the orange region of spacetime with similar properties. On the other hand, the pair $(q_1,q_2)$ is \emph{not} operationally separated from $p$, because the common future of $q_1$ and $q_2$ (blue region) is entirely contained in the green region -- the future of $p$, hence the is no spacetime point verifying the conditions in Definition \ref{opsepDef}. Also the pair $(q_2,q_3)$ is \emph{not} operationally separated from $p$, the common future of which is the violet region. The latter is also equal to the joint future of the triples $(q_1,q_2,q_3)$ and $(q_2,q_3,q_4)$, and also of the quadruple $(q_1,q_2,q_3,q_4)$. Consequently, neither $(q_1,q_2,q_3)$ nor $(q_2,q_3,q_4)$ nor $(q_1,q_2,q_3,q_4)$ is operationally separated from $p$. Finally, the triple $(q_1,q_2,q_4)$ is also \emph{not} operationally separated from $p$, as the common future of the former points coincides with the common future of $q_1$ and $q_2$ (blue region).

\begin{figure}[H]
\begin{center}
\resizebox{0.6\textwidth}{!}{\includegraphics[scale=1]{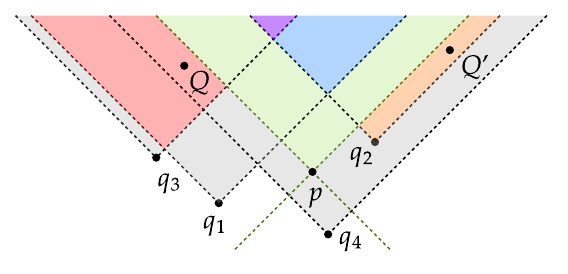}} \qquad
%\resizebox{0.5\textwidth}{!}{\includegraphics[scale=1]{box2.pdf}}
\caption{\label{fig:ocaus} An illustration of the operational separation relation --- see text for the description.
}
\end{center}
\end{figure}

\subsection{Spacetime random variables}

A \emph{random variable} (RV) $X$ is defined through a finite set $\X = \{x^{(1)}, \ldots, x^{(k)}\}$ and a probability distribution $P_X$ on $\X$. These have to be understood \emph{operationally}, as quantities related to a given physical system accessible to agents, and not the intrinsic properties of the system itself. For instance, the position of a particle (classical or quantum) is \emph{not} an RV, but the possible outcome of a position measurement is an RV. The values of an RV need not be numerical (though they can always be translated into numbers)  --- we could have, for instance, $\X = \{\text{red},\text{green},\text{blue}\}$ or $\X = \{\text{click},\text{no click}\}$.  We shall use the simplified notation 
$P_X(x) \vc P(X = x)$
to denote the probability that an RV $X$ acquires a value $x$.

Having fixed the spacetime structure, we promote any RV to a \emph{spacetime random variable} (SRV), following the work \cite{SRV}. An SRV is a pair $(X,p)$, where $X$ is an RV and $p$ is a point in the spacetime $\M$. 

The interplay between the (purely) operational and spacetime characteristics of SRVs induces three important properties:

Firstly, two SRVs, $(X,p)$ and $(X',q)$ associated with different points of spacetime, $p \neq q$, should be considered as different entities, even if they have the same sets of values, $\X = \X'$. Notwithstanding, we could demand $X$ and $X'$ to be perfectly correlated, through a perfect classical communication channel, provided that $q$ lies in the future of $p$.

Secondly, if we have two SRVs $(X,q_1)$ and $(Y,q_2)$, then an SRV $(f(X,Y),Q)$ is defined only for a gathering point $Q \in J^+(q_1) \cap J ^+(q_2)$ for any function $f(X,Y)$ (depending nontrivially on both arguments), cf. Fig. \ref{fig:OS}. This stems from the fact that there is no SRV if there is no agent who could read it out. %, even in principle.
%This basically means that there is no SRV if there is nobody to read it \textcolor{red}{(if it cannot be read out even in principle?)}. 
Computing any function depending nontrivially on RVs $X,Y$ 
requires gathering information about both $X$ and $Y$ at a {\it single} spacetime point $r$.

Thirdly, the probabilities associated with SRVs are `\emph{theoretical}' probabilities of \emph{single events}. In order to obtain the `\emph{empirical}' probabilities, which are computed from a finite detection statistics, we would need multiple SRVs, from which information is gathered and processed to yield the statistics (see Supplemental Information for the details).

\subsection{Interventions and box extensions}

In the context of causal modelling \cite{Pearl} it is often assumed that an agent can ``intervene'' on a certain RV. Such an intervention consists in forcing an RV to acquire a given value. Within the developed framework the concept of an intervention is formalised as follows:

Let $(A,q)$ be an SRV with a set of values $\A = \{a^{(1)},\ldots,a^{(k)}\}$ and a probability distribution $\{P_{A}\big[ (a,q) \big]\}_{a \in \A}$. An \emph{intervention} is an SRV $(I_A,q')$ with the set of possible values $\I_A = \{\idle,\doo(a^{(1)}), \ldots, \doo(a^{(k)})\}$ and a spacetime location $q' \preceq q$. It induces a collection of conditional probability distributions $\big\{\PboxR{(a,q)}{(i_A,q')}{A|I_A}\big\}$ with the following properties:
\begin{align}\label{interv}
\PboxR{(a,q)}{(\idle,q')}{A|I_A} = P_{A}\big[ (a,q) \big], && \PboxR{(a,q)}{(\doo(a'),q')}{A|I_A} = \delta_{a,a'},
\end{align}
for all $a,a' \in \A$. Formulas \eqref{interv} are intuitively clear and consistent with the standard understanding (see e.g. \cite[Eqs. 4a--4d]{Loops_PRA}): When there is no intervention the probability of getting an outcome $a$ is the initial one, but if one forces a concrete value $a'$, one will have it with certainty.

More generally, suppose that we have a collection of SRVs $(\bA,\bq)$, $(\bX,\bp)$ together with the collection of conditional probabilities $\big\{ \PboxR{(\ba,\bq)}{(\bx,\bp)}{\bA|\bX}\big\}$. Suppose now that an agent can intervene on the SRV $(A_j,q_j)$ for some fixed $j \in \{1,\ldots, n\}$. This means that we now have an additional SRV, $(I_{A_j},q_j')$, located at $q_j'$ in the past of $q_j$. We extend the collection of conditional probabilities to
$\big\{ \PboxR{(\ba,\bq)}{(\bx,\bp),(i_{A_j},q_j')}{\bA|\bX,I_{A_j}} \big\}$ with the following properties:
\begin{align}
  & \PboxR{(\ba,\bq)}{(\bx,\bp),(\idle,q_j')}{\bA|\bX,I_{A_j}} = \PboxR{(\ba,\bq)}{(\bx,\bp)}{\bA|\bX}, \label{Ppre} \\
  & \PboxR{(\ba,\bq)}{(\bx,\bp),(\doo(a_j'),q_j')}{\bA|\bX,I_{A_j}} = \delta_{a_j,a'_j} \cdot \PboxR{(\ba,\bq)^G}{(\bx,\bp),(\doo(a_j'),q_j')}{\bA^G|\bX,I_{A_j}} , \label{Ppost}
\end{align}
%where $\{(\ba,\bq)^\perp\} = \{(\ba,\bq)\} \setminus \{(a_k,q_k)\}$.
where $G \vc \{1,\ldots,n\} \setminus \{j\}$. The post-intervention probabilities, on the RHS of \eqref{Ppost}, are not determined by the pre-intervention ones, that is we expect in general
\begin{align}
\PboxR{(\ba,\bq)^G}{(\bx,\bp),(\doo(a_j'),q_j')}{\mathbf{A}^G|\mathbf{X},I_{A_j}} \neq \PboxR{(\ba,\bq)^G}{(\bx,\bp),(a_j,q_j)}{\mathbf{A}^G|\mathbf{X},A_j}.
\end{align}
Along the same lines, one can introduce interventions on all other SRVs in the setup.

\medskip

Three important comments are in order:

\medskip

Firstly, a collection of conditional probabilities \eqref{box} is general and includes the possibility that some of the SRVs $(X_i,p_i)$ are actually interventions. This can be seen in two ways: If we assume that the intervention on an SRV $(A_j,q_j)$ is introduced by the same agent who selects a value $x_j$ of the input SRV $(X_j,p_j)$, then we can set $q_j = p_j$ and promote the SRV $(X_j,p_j)$ to $((X_j,I_{A_j}),p_j)$ which takes values from $\X_j \times \I_{A_j}$. Alternatively, if we assume that the intervention is carried out by an independent agent, we can define an additional SRV as $(X_{n+1},p_{n+1}) \vc (I_{A_j},q_j')$ and trivially extend the setting by defining a corresponding fictitious SRV $(A_{n+1},q_{n+1})$ with a void set of values, $\A_{n+1} = \emptyset$.

Secondly, since our framework is purely operational, we do not specify a priori any (classical or quantum) causal model, which determines the pre- and post-intervention probabilities, and actually we do not even assume that such a model must exist. This implies, in particular, that all setups based on probabilities determined by pre- and post-intervention causal structures, such as \cite{Ringbauer16,Loops_PRA,Loops_PRL,Grothus24,ColbeckVilasini25} etc. 
%\ME{Add some other Villasini's, Grothus etc. papers}, 
are included in our framework.

Thirdly, an intervention is a (special kind of) an input RV, so it can always be used by the agent to encode information into the (possibly nonlocal) system at hand. Consequently, the extended set of probabilities --- whether involving an extended input or a new agent --- must abide by the no-signalling constraints \eqref{ONS_new}. In other words, the agent performing the intervention must not be able to use the intervention to (statistically) transmit any information outside of his future. In particular, he must not be able to signal through the choice of whether to intervene or not.

\section{Supplementary Information}

\subsection{Empirical probabilities from SRVs}\label{subsec:SRVs}

A single SRV $(X,p)$ is associated with theoretical probabilities of a single event. In order to consider the empirical probabilities we need $N$ SRVs,  $\{(X_k,p_k)\}_{k=1}^N$ with the same set of possible values, $\X_k = \X$ for all $k$, but different points, $p_k \neq p_l$ for $k \neq l$. %These are associated with `pixels' of a detecting device, each yielding a single outcome. 

We have three possible measurement schemes:
\begin{enumerate}
    \item An array of detectors working simultaneously. This corresponds to $p_j \nprec p_k$ for all $j \neq k$.
    \item Consecutive measurement with the same detector. This corresponds to $p_j \prec p_k$ for $j < k$.
    \item A combination of 1. and 2.
\end{enumerate}

Now, we define an SRV $(X,Q)$ with the set of values $\X$, which simply counts the number of occurrences of a given value $x$ among the outcomes $\{x_k\}_{k=1}^N$. It has a \emph{frequency} probability distribution $P(x) \vc \tfrac{1}{N} \sum_k \delta_{x_k,x}$. Whichever way of building the statistics we choose, we must eventually \emph{gather} the information from all detectors. This requires that $Q \in \cap_{k=1}^N J^+(p_k)$.

A natural generalisation of this setting is to associate SRVs with \emph{regions} of spacetime, rather than points. An SRV $(X,\K)$ corresponds to a \emph{potential} measurement statistics of an RV $X$ gathered in a spacetime region $\K \subset \M$. This will be covered in a forthcoming paper.

\subsection{Properties of the operational separation relation}
\label{subsec::bipartite}

Let us recall the definition of operational separation of tuples of spacetime points, presented in the main text, in a slightly more general form.

\begin{Def}
\label{opsepDef_rs}
We say that an $s$-tuple $\bq$ of spacetime points is \emph{operationally separated} from an $r$-tuple $\bp$ if there exists a point $Q \in \M$, such that $q_j \preceq Q$ and $p_i \nprec Q$ for all $i \in \{1,\ldots,r\}$, $j \in \{1,\ldots,s\}$.
\end{Def}

This condition %that the $s$-tuple of spacetime points $\bq = \{q_j\}$ is \emph{operationally separated} from the $r$-tuple of points $\bp = \{p_i\}$, as given by Definition \ref{opsepDef}, 
can be equivalently expressed as
\begin{align*}
\exists Q \in \M \quad \bq \subset J^-(Q) \quad \textnormal{and} \quad Q \not\in R^+(\bp).
\end{align*}

The complementary condition that $\bq$ is \emph{not} operationally separated from $\bp$ defines the following relation between finite subsets of~$\M$
\begin{align}
\label{rel1}
\bp \prec_o \bq \quad \Leftrightarrow \quad \left[ \forall Q \in \M \quad \bq \subset J^-(Q) \ \Rightarrow \ Q \in R^+(\bp)\right].
%\{q_j\} \setminus R^-(Q) \neq \emptyset \quad \textnormal{or} \quad \{p_i\} \cap R^-(Q) \neq \emptyset.
\end{align}
This relation can be equivalently defined without referring to a gathering point $Q$ by the inclusion
\begin{align}
\label{rel2}
  \bp \prec_o \bq \quad \Leftrightarrow \quad \bigcap\nolimits_{j} J^+(q_j) \subset \bigcup\nolimits_{i} R^+(p_i).
\end{align}
Note that the intersection $\cap_{j} J^+(q_j)$ is exactly the set of all gathering points --- points at which it is possible to gather the information from all $q_j$'s.

In the case when $s=1$, for any fixed $r$, the relation $\prec_o$ reduces to the following condition
\begin{align}\label{rel3}
\bp \prec_o q \quad \Leftrightarrow \quad \exists i \ \ p_i \prec q.
\end{align}
Indeed, the latter can be equivalently expressed as $q \in \cup_i R^+(p_i)$, which, by the transitivity of $\prec$, is in turn equivalent to the inclusion $J^+(q) \subset \cup_i R^+(p_i)$. But this is nothing but $\bp \prec_o q$ on the strength of characterisation \eqref{rel2}.

Thus, for $s=1$ the $\prec_o$ relation reduces to the standard strict causal relation $\prec$. This was to be expected --- since in this case we are interested in the outcomes in a single spacetime point $q$, all the relevant information is already gathered in that point, ready to be operationally accessed.

Therefore, in the case $s=1$, the NS constraints \eqref{ONS_new} can be expresses simply as 
\begin{align}
\text{If } \quad p_i \not\prec q, \text{ for all } i \in F, \quad \text{ then } \quad \PboxR{(a,q)}{(\bx,\bp)}{A|\bX} = \PboxR{(a,q)}{(\bx',\bp)}{A|\bX}
%    P\big( (a,q) \, \big\vert \, (\bx,\bp)^F, (\bx,\bp)^{F^\complement} \big) = P\big( (a,q) \, \big\vert \, (\bx',\bp)^F, (\bx,\bp)^{F^\complement}  \big)
\end{align}
for any $a$, any $\bx$ and any $\bx'$, such that $x'_i = x_i$, for $i \notin F$. Any violation of this condition could in principle be utilised to perform the most basic form of signalling --- see \cite{PRA2020} for the in-depth analysis of the $r=s=1$ case.

Observe, finally, that one has the following implications
\begin{align}\label{prop1}
\exists i,j \ \ p_i \prec q_j \ \ \begin{array}{lcr} \rotatebox[origin=c]{15}{$\Rightarrow$} & \ \exists i \ \ p_i \prec_o \bq & \rotatebox[origin=c]{-15}{$\Rightarrow$}
\\[4pt] \rotatebox[origin=c]{-15}{$\Rightarrow$} & \ \exists j \ \ \bp \prec_o q_j & \rotatebox[origin=c]{15}{$\Rightarrow$}
\end{array}
\ \ \bp \prec_o \bq.
\end{align}
Indeed, the first condition can be equivalently expressed as $J^+(q_j) \subset R^+(p_i)$ for some $i,j$, what immediately implies both that $\bigcap_{j} J^+(q_j) \subset R^+(p_i)$ for some $i$, as well as that $J^+(q_j) \subset \bigcup_{i} R^+(p_i)$ for some $j$. On the strength of \eqref{rel2}, these conditions are equivalent to $\exists i \ p_i \prec_o \bq$ and to $\exists j \ \bp \prec_o q_j$, respectively. Furthermore, each of the above inclusions in turn implies the inclusion $\bigcap_{j} J^+(q_j) \subset \bigcup_{i} R^+(p_i)$, equivalent to the final condition $\bp \prec_o \bq$.

Equivalently, one can rephrase \eqref{prop1} as saying that if $\bq$ is operationally separated from the tuple $\bp$ then
\begin{itemize}
\item $\bq$ is operationally separated from every single point $p_i$ for every $i \in \{1,\ldots,r\}.$
\item For every $j \in \{1,\ldots,s\}$, $q_j$ is operationally separated from $\bp$.
\item Each of the above two conditions further implies that $p_i \not\prec q_j$ for all $i,j$.
\end{itemize}

\subsection{Proof of Theorem \ref{thm:MAIN}}

We first show that if operational superluminal signalling is possible in a physical theory yielding a set of experimental conditional probabilities \eqref{box} then the no-signalling constraints \eqref{ONS_new} are violated. Indeed, the possibility of operational signalling means --- by definition --- that there exists an input SRV $(X,p)$ and an output SRV $(B,q)$ accessible to two agents, such that there exist $x \neq x' \in \X$ and $b \in \B$ such that
\begin{align}
    \PboxR{(b,q)}{(x,p)}{B|X} \neq \PboxR{(b,q)}{(x',p)}{B|X}.
\end{align}
The signalling is superluminal if the readout point $q$ lies outside of the future light cone of the sending point $p$, that is $p \nprec q$. On the strength of formula \eqref{rel3} the point $q$ is operationally separated from $p$ if and only if $p \nprec q$, and hence condition \eqref{ONS_new} is violated.

The proof that the constraints \eqref{ONS_new} are also necessary for the lack of operational superluminal signalling requires a bit more care. Suppose \eqref{ONS_new} is violated. This means that there exist $n$-tuples of SRVs $(\bA,\bq)$, $(\bX,\bp)$ and subsets of indices $F,G \subset \{1,\ldots,n\}$, such that $\bq^G$ is operationally separated from $\bp^F$, but
\begin{align}\label{ass}
    \PboxR{(\ba,\bq)^G}{(\bx,\bp)}{\bA^G|\bX} \neq \PboxR{(\ba,\bq)^G}{(\bx',\bp)}{\bA^G|\bX},
\end{align}
for some $\ba$, $\bx$, $\bx'$ with $x_i' = x_i$ for all $i \notin F$. Let us now express $F$ explicitly as $\{i_1,\ldots,i_r\}$ and for any $H \subset \{1,\ldots,n\}$ denote its complement as $H^\complement \vc \{1,\ldots,n\} \setminus H$. Consider the following list of $(r+1)$ conditional probabilities, where each two consecutive items differ on at most one setting:
\begin{align*}
    & \PboxR{(\ba,\bq)^G}{(\bx,\bp)}{\bA^G|\bX}
    \\
    & \PboxR{(\ba,\bq)^G}{(\bx',\bp)^{\{i_1\}}, (\bx,\bp)^{\{i_1\}^\complement}}{\bA^G|\bX}
    \\
    & \PboxR{(\ba,\bq)^G}{(\bx',\bp)^{\{i_1,i_2\}}, (\bx,\bp)^{\{i_1,i_2\}^\complement}}{\bA^G|\bX}
    \\
    & \qquad \vdots
    \\
    & \PboxR{(\ba,\bq)^G}{(\bx',\bp)^{F}, (\bx,\bp)^{F^\complement}}{\bA^G|\bX}.
\end{align*}
%\begin{align*}
%    & P\big( (\ba,\bq)^G  \, \big\vert \, (\bx,\bp) \big)
%    \\
%    & P\big( (\ba,\bq)^G  \, \big\vert \, (x'_{i_1}, p_{i_1}), (x_{i_2}, p_{i_2}), \ldots, (x_{i_r}, p_{i_r}), (\bx,\bp)^{F^\complement} \big)
%    \\
%    & P\big( (\ba,\bq)^G  \, \big\vert \, (x'_{i_1}, p_{i_1}), (x'_{i_2}, p_{i_2}), \ldots, (x_{i_r}, p_{i_r}), (\bx,\bp)^{F^\complement} \big)
%    \\
%    & \qquad \vdots
%    \\
%    & P\big( (\ba,\bq)^G  \, \big\vert \, (\bx',\bp)^F, (\bx,\bp)^{F^\complement} \big).
%\end{align*}

By assumption \eqref{ass}, the first and the last items in the above list are \emph{not} equal. Hence, there exists at least one pair of consecutive items which are not equal, i.e., there exists $k$ such that
\begin{align*}%\label{Signalling}
   & \PboxR{(\ba,\bq)^G}{(\bx',\bp)^{\{i_1,\ldots,i_{k-1}\}} , (x_{i_k}, p_{i_k}),  (\bx,\bp)^{\{i_1,\ldots,i_k\}^\complement}}{\bA^G|\bX}\\
\neq\, & \PboxR{(\ba,\bq)^G}{(\bx',\bp)^{\{i_1,\ldots,i_{k-1}\}} , (x'_{i_k}, p_{i_k}),  (\bx,\bp)^{\{i_1,\ldots,i_k\}^\complement}}{\bA^G|\bX}.
\end{align*}
Observe now that because $\bq^G$ is operationally separated from $\bp^F$, then by the top-right implication in \eqref{prop1} $\bq^G$ is operationally separated from $p_{i_k}$. This means that there exists a gathering event $Q$ for $\bq^G$ such that $p_{i_k} \nprec Q$. Consequently, an agent who can freely choose between the input values $x_{i_k}$ and $x_{i_k}'$ of the SRV $(X_{i_k},p_{i_k})$ is able to statistically communicate a bit to an agent who gathers the information from the SRVs $(\bA,\bq)^G$ at the point $Q$. This constitutes an operational protocol for statistical superluminal signalling.

\subsection{General `Bell-type' scenarios}

%We recall the known stuff. {\color{blue} Maybe we should move (parts of) this section to the introduction/methods/SI and start from our stuff straight away?}
A standard Bell scenario \cite{Bell_Nonlocal} concerns $n$ spacelike separated agents performing local measurements, with $m$ settings and $k$ outputs, on a joint physical system. It includes 2$n$ random variables, $X_1,\ldots,X_n$ and $A_1,\ldots,A_n$ with values $x_i \in \{1,\ldots,m\}$ and $a_i \in \{1,\ldots,k\}$ and a set of conditional probabilities:
\begin{align}\label{box3}
P\big(a_1, \ldots, a_n \, \big\vert \, x_1, \ldots, x_n \big).% = P\big(A_1 = a_1, \ldots, A_n = a_n \, \big\vert \, X_1 = x_1, \ldots, X_n = x_n \big).
\end{align}

Typically, it is assumed that $X_i$ and $A_i$ are associated with same spacetime point $p_i$ and that %
the SRVs are mutually spacelike separated, $p_i \npreceq p_j$ for all $i \neq j$. For such a spacetime configuration one assumes the `no-signalling constraints':
\begin{align}\label{NS_old}
 \sum_{a_j} P(a_1,\ldots,a_j,\ldots,a_n \, | \, x_1,\ldots,x_j,\ldots,x_n) = \sum_{a_j} P(a_1,\ldots,a_j,\ldots,a_n \, | \, x_1,\ldots,x_j',\ldots,x_n), 
\end{align}
for any fixed $j \in \{1, \ldots, n\}$ and all possible values of $a_1,\ldots,a_{j-1},a_{j+1},\ldots,a_n$,  $x_1,\ldots,x_n$ and $x_j'$. One can show that it implies the constraints for all marginal distributions of smaller-sized subsets.

These conditions assure basic compatibility with relativity --- the spacelike separated parties cannot communicate (even statistically).  The conditions \eqref{NS_old} are sufficient for the lack of operational superluminal information transfer. However, as shown in \cite{PawelRaviCausality}, they are not necessary for $n \geq 3$ parties.

The most general `Bell-type' scenario, $B\big(n,(\X_i)_{i=1}^n, (\A_i)_{i=1}^n\big)$, for unveiling correlations between $n$ parties  is defined as follows: For all $i \in \{1,\ldots,n\}$ the sets $\X_i$ and $\A_i$ determine the sets of possible values the random variables $X_i$ and $A_i$, respectively, can take. These sets can have different natures for different variables. For instance, we could have $\X_1 = \{0,\pi/4,\pi/2,3\pi/4,\pi\}$ --- parametrising the settings of a polariser,  $\X_2 = \{1\,\textup{T},2\,\textup{T},3\,\textup{T}\}$ --- the strength of the applied magnetic field or, more abstractly, $\X_3 = \{\text{no intervention, intervention}\}$. Similarly, different $\A_i$'s could encode the possible values of different physical quantities, such as position, energy, time of arrival or, simply, a detector click or the lack thereof.

In principle, the sets of values $\X_i$, $\A_i$ could be continuous, but in practice in any experiment both the inputs and the outputs are binned, because of the finite precision and resolution of any physical devices. Consequently, we can assume without loss of generality that all $\X_i$ and $\A_i$ are discrete and finite. Then, any such general Bell-type scenario can be embedded in the standard Bell scenario $B(n,m,k)$ for $m = \max_i \# \X_i$ and $k = \max_i \# \A_i$ large enough. Indeed, let us set $m_i = \# \X_i$ (with $\# \{\emptyset\} = 1$),   take some ordering functions $f_i: \{1,\ldots,m_i\} \to \X_i$ and define new random variables $Y_i$ with values in $\{1,\ldots,m_i\}$. Similarly, for the outputs we chose some $g_i$'s and define new RV's $B_i$ with values in $\{1,\ldots,k_i\}$. Then, given the set of probabilities of the original Bell-type scenario we define
\begin{align}
P(b_1,\ldots,b_n \, | \, y_1, \ldots, y_n ) = \begin{cases} 0,\qquad\qquad \qquad  \text{ if either } y_i > m_i \text{ or } b_i > k_i \text{ for some }i,
\\
P\big( g_1(b_1),\ldots,g_n(b_n) \, \big| \, f_1(y_1), \ldots, f_n(y_n) \big), \quad \text{ otherwise}.
\end{cases}
\end{align}  

\subsection{Ruling out closed causal loops in Minkowski spacetime using operational no-signalling conditions}

In \cite{Loops_PRL}, the authors had an interesting claim. Namely, it was shown that if one only forbids superluminal signalling while allowing for the possibility of superluminal causal influences, then there arises the mathematical possibility of closed causal loops embeddable in $(1+1)$-Minkowski spacetime that nevertheless do not lead to superluminal signalling. This was done by means of a general framework for causality developed by the authors in \cite{Loops_PRA}, and by a concrete construction of an ``operationally detectable causal loop'' embeddable in a $(1+1)$-Minkowski spacetime based on a fine-tuned classical causal model.

Here, we reexamine this claim in light of the operational no-signalling conditions of Def. \ref{opcausDef} which were shown to be necessary and sufficient to rule out operational signalling in Thm. \ref{thm:MAIN}. 

The authors of \cite{Loops_PRL} considered three causal models that they termed (i) the one-time pad model $\mathcal{G}^\textup{OTP}$, (ii) the jamming model $\mathcal{G}^\textup{jam}$ and (iii) the closed causal loop model $\mathcal{G}^\textup{loop}$. While the three models differed in their causal structures represented in terms of a Directed Acyclic Graph (DAG), they all lead to the same probability distribution on the observed random variables. Specifically, all three models had the same observed random variables $A, B, C$ (which are taken to be classical and binary) with the same probability distribution:
\begin{align}\label{boxL}
    P_{ABC}(a,b,c) = \begin{cases}
        \tfrac{1}{4}, & \text{ if } b= a \oplus c,\\
        0, & \text{ otherwise}.
    \end{cases}
\end{align}
Formulating these as SRVs $(A, q_1), (B, q_2)$ and $(C, q_3)$ a further commonality is that in all three cases $q_1 \npreceq q_3$ and $q_3 \npreceq q_1$. 

In $\mathcal{G}^\textup{OTP}$, the spacetime configuration is such that $J^+(q_2) \subseteq J^+(q_1) \cap J^+(q_3)$, with the causal mechanism being $A = E_A, C = E_C, B = A \oplus C$ with $E_A, E_C$ being independent and uniformly distributed. In this case, $B$ is uniformly distributed with $P_{\mathcal{G}^\textup{OTP}}(B|AC) \neq P_{\mathcal{G}^\textup{OTP}}(B)$, $P_{\mathcal{G}^\textup{OTP}}(B|A) = P_{\mathcal{G}^\textup{OTP}}(B)$ and $P_{\mathcal{G}^\textup{OTP}}(B|C) = P_{\mathcal{G}^\textup{OTP}}(B)$. In other words, $\{A,C\}$ jointly affects $B$ while neither $A$ nor $C$ individually affects $B$.

In $\mathcal{G}^\textup{jam}$, the spacetime configuration is such that $J^+(q_1) \cap J^+(q_3) \subseteq J^+(q_2)$, with the causal mechanism being $A = \Lambda = E_{\Lambda}, B = E_B, C = B \oplus \Lambda$ for an unobserved variable $\Lambda$ with $E_{\Lambda}, E_B$ being independent and uniformly distributed. In this case, $P_{\mathcal{G}^\textup{jam}}(AC|B) \neq P_{\mathcal{G}^\textup{jam}}(AC)$, $P_{\mathcal{G}^\textup{jam}}(C|B) = P_{\mathcal{G}^\textup{jam}}(C)$ and $P_{\mathcal{G}^\textup{jam}}(A|B) = P_{\mathcal{G}^\textup{jam}}(A)$. In other words, $B$ affects $\{A, C\}$ while $B$ does not affect either $A$ or $C$ individually. 

In the fine-tuned causal loop $\mathcal{G}^\textup{loop}$, the fine-tuned spacetime configuration is such that  $J^+(q_1) \cap J^+(q_3) = J^+(q_2)$, with the causal mechanism being $A = \Lambda = E_{\Lambda}$, $C = B \oplus \Lambda$, $B = A \oplus C$ with $E_{\Lambda}$ being uniformly distributed. In this case, the post-intervention causal structure $\mathcal{G}^\textup{loop}_{\textup{do}(AC)}$ is identical to $\mathcal{G}^\textup{OTP}$ and the post-intervention causal structure $\mathcal{G}^\textup{loop}_{\textup{do}(B)}$ is identical to $\mathcal{G}^\textup{jam}$. That is in this case we have that $\{A,C\}$ jointly affects $B$ \textit{and} $B$ affects $\{A, C\}$ while there are no pairwise affects relations between $A, B, C$. Furthermore, by considering an experiment in which one runs through all possible interventions on $A, C$ and another experiment in which all possible interventions are performed on $B$, one could ``operationally detect'' the closed causal loop. In other words, here we have a situation in which no superluminal signalling is taking place (all observable signalling is to the causal future), yet one has a closed causal loop embedded in $(1+1)$-Minkowski spacetime. This leads to an apparent contradiction with earlier claims in \cite{PawelRaviCausality} where similar conditions to those in Thm. \ref{thm:MAIN} were claimed to be necessary and sufficient to rule out closed causal loops. Let us now elucidate how the improved Thm. \ref{thm:MAIN} helps to rule out operational signalling and causal loops that arise as a result of operational signalling. 

%\textit{Formulating $\mathcal{G}^\textup{loop}$ in the box framework.-} 
In the framework considered in this paper, the interventions amount to ``inputs'' $(I_A, p_1), (I_B, p_2)$ and $(I_C, p_3)$ to a box with $p_i = q_i$. The input $I_A$ can take values `$\textup{idle}$' (corresponding to no intervention) and `$\textup{do}(a')$' corresponding to setting the value of random variable $A$ to $a'$, and similarly for the other inputs $I_B$ and $I_C$. Furthermore, the authors assume that interventions can be made on all three SRVs. Focusing on the model $\mathcal{G}^\textup{loop}$, we see that this translates in our framework to an extended box 
\begin{align}\label{box_loops}
    %P\big( (a,q_1), (b,q_2), (c,q_3) \, \big\vert \, (i_A,p_1), (i_B,p_2), (i_C,p_3) \big).
    \PboxR{(a,q_1), (b,q_2), (c,q_3)}{(i_A,p_1), (i_B,p_2), (i_C,p_3)}{ABC|I_A I_B I_C}.
\end{align}
The box \eqref{box_loops} is subject to the following relations (for sake of brevity we omit the RVs' names and spacetime locations, as well as use a simplified notation for the marginals, e.g., $P(b) = \sum_{a,c} P(a,b,c)$):
\begin{align}
   & P\big( a,b,c \, \big\vert \, \idle,\idle,\idle \big) = P(a,b,c), \label{Pl1}\\
   & P\big( a,b,c \, \big\vert \, \doo(a'),\doo(b'),\doo(c') \big) = \delta_{a,a'} \cdot \delta_{b,b'} \cdot \delta_{c,c'},\label{Pl2}\\
   & P\big( a,b,c \, \big\vert \, \doo(a'),\idle,\doo(c') \big) = \delta_{a,a'}  \cdot \delta_{c,c'} \cdot P\big(b \, \big\vert \, \doo(a),\doo(c) \big), \label{Pl3}\\
   & P\big( a,b,c \, \big\vert \, \idle,\doo(b'),\idle \big) = \delta_{b,b'} \cdot P\big(a,c \, \big\vert \, \doo(b) \big), \label{Pl4}\\
   & P\big( a,b,c \, \big\vert \, \doo(a'),\idle,\idle \big) = \delta_{a,a'} \cdot P\big(b,c \, \big\vert \, \doo(a) \big), \label{Pl5}\\
   & P\big( a,b,c \, \big\vert \, \idle,\idle,\doo(c') \big) = \delta_{c,c'} \cdot P\big(a,b \, \big\vert \, \doo(c) \big), \label{Pl6}\\
   & P\big( a,b,c \, \big\vert \, \doo(a'),\doo(b'),\idle \big) = \delta_{a,a'} \cdot \delta_{b,b'} \cdot P\big(c \, \big\vert \, \doo(a), \doo(b) \big), \label{Pl7}\\
   & P\big( a,b,c \, \big\vert \, \idle,\doo(b'),\doo(c') \big) = \delta_{c,c'} \cdot \delta_{b,b'} \cdot P\big(a \, \big\vert \, \doo(b), \doo(c) \big). \label{Pl8}
\end{align}
The first two conditions, \eqref{Pl1} and \eqref{Pl2}, are fixed by the consistency constraints \eqref{interv}. The constraints that $\{A,C\}$ jointly affects $B$ and at the same time $B$ affects $\{A, C\}$ while no pairwise affects relations exist between $A, B, C$ translate to the following demands on these probabilities: 
\begin{align}
    & \forall\, a,b,c \quad \Pbox{a}{\doo(b)} = \Pbox{a}{\doo(c)} = P(a),\\
    & \forall\, a,b,c \quad \Pbox{b}{\doo(a)} = \Pbox{b}{\doo(c)} = P(b),\\
    & \forall\, a,b,c \quad \Pbox{c}{\doo(a)} = \Pbox{c}{\doo(b)} = P(c),\\
    & \exists\, a,b,c \quad \Pbox{b}{\doo(a),\doo(c)} \neq P(b), \label{jam_viol}\\
    & \exists\, a,b,c \quad \Pbox{a,c}{\doo(b)} \neq P(a,c).
\end{align}
%\textit{Fine-tuned spacetime configurations and restricted freedom of agents.-}
As we have mentioned, the spacetime configuration of $\mathcal{G}^\textup{loop}$ is fine-tuned to obey $J^+(q_1) \cap J^+(q_3) = J^+(q_2)$, i.e., the SRV $B$ is located exactly at the intersection of the causal futures of $A$ and $C$. Firstly, note that within the relativistic model of a spacetime, such a configuration can only be realized with a single spatial dimension, in which case any spacetime manifold is conformally flat --- that is, it has the same relativistic causal structure as ($1+1$)-Minkowski spacetime. 

Secondly, with the definition of strict future detailed earlier, we observe that $q_2$ is \emph{not} in the strict future of $p_2$ (because they are the same point!). In this case, we see that $\{q_1,q_3\}$ is operationally separated from $p_2$. Indeed, there exists a point $Q = p_2$, at which the information on both $A$ and $C$ is gathered, so $q_1,q_3 \prec Q$, while $p_2 \nprec Q$. Now we see that the operationally causal constraints in \eqref{ONS_new} from Def. \ref{opcausDef} and Thm. \ref{thm:MAIN} are violated by the probability distributions forming the above box. Specifically, writing \eqref{jam_viol} out explicitly, we see that there exists at least one set of Alice's input $\doo(a')$, Charlies's input $\doo(c')$ and Bob's output $b$ such that
\begin{align}
\Pbox{b}{\doo(a'),\idle,\doo(c')} \neq \Pbox{b}{\doo(a'),\idle,\idle} = \Pbox{b}{\idle,\idle,\doo(c')} = \Pbox{b}{\idle,\idle,\idle}.
  %\sum_{a,c} \Pbox{a,b,c}{\doo(a'),\idle,\doo(c')} & \neq \sum_{a,c} \Pbox{a,b,c}{\doo(a'),\idle,\idle} \\
  %& = \sum_{a,c} \Pbox{a,b,c}{\idle,\idle,\doo(c')} \notag\\
  %&= \sum_{a,c} \Pbox{a,b,c}{\idle,\idle,\idle}. \notag
\end{align}
In other words, Bob's marginal probability distribution depends on Alice's and/or Charlie's input. We therefore conclude that \eqref{jam_viol} violates the operational no-signalling constraints \eqref{ONS_new}, since $\{q_1,q_3\}$ is operationally separated from $p_2$.

\subsection{Monogamy of XOR games in signalling theories}\label{app:XORmono-signalling}

%%%%%%%%%%%%%%%%%%%%%%%%%%%%%%%%%%%%%%%%%%%%%%%%%%%%%%%%%%%%%%%%%%%

In \cite{Monogamy_HR}, certain monogamy relations for nonlocal correlations in jamming scenarios were derived. On the basis of these monogamy relations, it was claimed that jamming is unphysical - in particular, it was claimed that any physical mechanism for jamming would lead to superluminal signalling. 

In this section, we refute the claim by explicitly proving the existence of non-trivial monogamy relations even in signalling theories (in evidently physical configurations where players signal to each other via direct communication at the speed of light). We clarify that the existence of such nontrivial monogamies should be interpreted as frustrations caused by opposing constraints in the Bell inequalities (or equivalently in the nonlocal games) rather than limitations of the underlying physics. As such these monogamies can not be used to disprove the physicality of jamming, just as evidently as they can not be used to disprove the physicality of signalling.

For convenience, in this section we work with two-player non-local games which are equivalent to bipartite Bell inequalities. The value of a general two-player non-local game $G_{AB}$ is written as
\begin{equation}
\omega(G_{AB}) := \sum_{x,y,a,b} \mu(x,y) V(a,b,x,y) P(a,b|x,y),
\end{equation}
where $x \in X, y \in Y$ denote the inputs of the two players Alice and Bob respectively, while $a,b$ denote their respective outputs. The probability distribution with which the inputs are chosen (by a referee in a non-local game) is denoted by $\mu(x,y)$ with $0 \leq \mu(x,y) \leq 1$ - when the distribution is of product form $\mu^A(x) \cdot \mu^B(y)$, the game is said to be free. The predicate $V(a,b,x,y) \in \{0,1\}$ defines the winning condition for the game, and the set of conditional distributions $\{P(a,b|x,y)\}$ is considered to belong to some well-defined theory, such as classical $\mathcal{C}$, quantum $\mathcal{Q}$, no-signalling $\mathcal{NS}$ and signalling $\mathcal{S}$ theories. In general, $\mathcal{C} \subseteq \mathcal{Q} \subseteq \mathcal{NS} \subseteq \mathcal{S}$. 

In this note, we are interested in general signalling theories, wherein the probability distributions $P(a,b|x,y)$ are only constrained by the conditions of non-negativity ($P(a,b|x,y) \geq 0$ for all $a,b,x,y$) and normalization ($\sum_{a,b} P(a,b|x,y) = 1$ for all $x,y$). We will focus on two-player XOR games also known as bipartite correlation Bell inequalities \cite{PhysRevLett.113.210403} wherein the players' outputs are binary ($a,b \in \{0,1\}$) and the winning condition depends only on the XOR of the players' outputs. In other words, the bipartite XOR game is defined by a predicate of the form $V(a \oplus b = f(x,y))$ with a function of the inputs $f(x,y) \in \{0,1\}$ and where $\oplus$ denotes the XOR. 
%The fact that XOR games are equivalent to bipartite correlation Bell inequalities can be seen from the expression for the two-body correlator
%\begin{equation}
%\langle A_x B_y \rangle = P(a \oplus b = 0 | x,y) - P(a \oplus b = 1|x,y),
%\end{equation}
%where $A_x, B_y$ denote binary observables of the two players respectively.

Furthermore, we will focus on games where the input alphabets of the two players are the same, $|X| = |Y| = m$ for $m \geq 2$, the function $f$ is defined for all input pairs $(x,y)$ (the function $f$ is a total function) and where the input distribution is uniform, i.e., $\mu(x,y) = \frac{1}{m^2}$ for all $x,y$. As we shall see, our findings will admit ready generalization to games where the above conditions are not met. 

We aim to find the monogamy relation between the values of such games achievable by pairs of players $(A,B)$, $(A,C)$ and $(B,C)$ in any signalling theory. That is, we aim to calculate
\begin{equation}
\max_{\{P(a,b,c|x,y,z)\} \in \mathcal{S}} \bigg[\omega(G_{AB}) + \omega(G_{AC}) + \omega(G_{BC}) \bigg],
\end{equation} 
where $G$ refers to a generic two-player XOR game of the above type, the maximisation is over all tripartite signalling boxes $\{P(a,b,c|x,y,z)\}$ of three players with inputs $x,y,z$ and outputs $a,b,c$ respectively. At this point, one has to be careful since a value such as $\omega(G_{AB})$ could depend on the specific value of the input $z = 0$ or $z=1$ of the player $C$. At this point, we focus on a definition of $\omega(G_{AB})$ as the winning probability of Alice and Bob averaged over all inputs $z$ of Charlie. We prove the following theorem showing the existence of nontrivial monogamy relations for general bipartite XOR games in even signalling theories!
\begin{Thm}
    Let $G$ refer to a total function two-player XOR game with $m$ inputs per player, and let the input distribution be uniform. Denote by $\mathcal{S}$ the set of tripartite signalling behaviors. Then the maximum of the sum of pairwise winning probabilities achievable by pairs of players $(A,B)$, $(A,C)$ and $(B,C)$ in any signalling theory obeys
    \begin{eqnarray}
\max_{\{P(a,b,c|x,y,z)\} \in \mathcal{S}} \bigg[\omega(G_{AB}) + \omega(G_{AC}) + \omega(G_{BC}) \bigg] = \frac{2}{m^3} \left(|S_{AAA}| + |S_{ACC}|\right) + \frac{3}{m^3} \left(|S_{CCC}| + |S_{AAC}| \right),
\end{eqnarray}
where 
\begin{eqnarray}
S_{AAA} &:=& \left\{ (x,y,z) \in X \times Y \times Z \big| f(x,y) = f(y,z) = f(x,z) = 1 \right\}, \nonumber \\ 
S_{ACC} &:=& \left\{ (x,y,z) \in X \times Y \times Z \big| (f(x,y), f(y,z), f(x,z)) \in \{ (1,0,0), (0,1,0), (0,0,1) \} \right\}, \nonumber \\ 
S_{AAC} &:=& \left\{ (x,y,z) \in X \times Y \times Z \big| (f(x,y), f(y,z), f(x,z)) \in \{ (1,1,0), (1,0,1), (0,1,1) \} \right\}, \nonumber \\ 
S_{CCC} &:=& \left\{ (x,y,z) \in X \times Y \times Z \big| f(x,y) = f(y,z) = f(x,z) = 0 \right\}.
\end{eqnarray}
\end{Thm}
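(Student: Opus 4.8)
\emph{Proof strategy.} The plan is to exploit the defining freedom of the signalling set $\mathcal{S}$ --- which constrains $\{P(a,b,c|x,y,z)\}$ only by non-negativity and per-input normalisation --- to reduce the maximisation to an \emph{independent}, pointwise optimisation over outcomes for each fixed input triple $(x,y,z)$, and then to solve the resulting finite combinatorial problem using a single parity identity for XOR triples.

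First I would expand the objective. Using that here $\omega(G_{AB})$ is the winning probability of the pair $(A,B)$ averaged over Charlie's uniformly distributed input (and analogously for the other two pairs), together with $\sum_c P(a,b,c|x,y,z) = P(a,b|x,y,z)$, one obtains
\[
\omega(G_{AB}) + \omega(G_{AC}) + \omega(G_{BC}) = \frac{1}{m^3}\sum_{x,y,z}\sum_{a,b,c} \Theta_{x,y,z}(a,b,c)\, P(a,b,c|x,y,z),
\]
where $\Theta_{x,y,z}(a,b,c) := V\!\big(a\oplus b = f(x,y)\big) + V\!\big(b\oplus c = f(y,z)\big) + V\!\big(a\oplus c = f(x,z)\big)$. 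Since the conditional distributions attached to distinct input triples are entirely unlinked in $\mathcal{S}$, the maximum distributes over the outer sum, and for each $(x,y,z)$ it is attained by placing all the mass of $P(\cdot,\cdot,\cdot|x,y,z)$ on an outcome maximising $\Theta_{x,y,z}$; hence the maximum equals $\frac{1}{m^3}\sum_{x,y,z}\max_{a,b,c}\Theta_{x,y,z}(a,b,c)$.

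Next I would solve the pointwise problem. The key identity is $(a\oplus b)\oplus(b\oplus c)\oplus(a\oplus c) = 0$, so for every $(a,b,c)$ the vector $\big(a\oplus b,\, b\oplus c,\, a\oplus c\big)$ has even Hamming weight, i.e.\ lies in $E := \{(0,0,0),(1,1,0),(1,0,1),(0,1,1)\}$, and conversely every element of $E$ is realised by some $(a,b,c)$. Writing $\bv := \big(f(x,y),f(y,z),f(x,z)\big)$, it follows that $\max_{a,b,c}\Theta_{x,y,z} = 3 - \min_{\bw \in E} d_H(\bv,\bw)$, with $d_H$ the Hamming distance. If $\bv$ has even weight then $\bv \in E$ and the value is $3$; if $\bv$ has odd weight then the nearest even-weight string sits at distance $1$ and the value is $2$. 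The even-weight cases $(0,0,0)$ and $\{(1,1,0),(1,0,1),(0,1,1)\}$ are, by definition, exactly the input triples in $S_{CCC}$ and $S_{AAC}$, while the odd-weight cases $(1,1,1)$ and $\{(1,0,0),(0,1,0),(0,0,1)\}$ are exactly those in $S_{AAA}$ and $S_{ACC}$; these four sets partition $X \times Y \times Z$. Summing, the maximum equals $\frac{1}{m^3}\big(3|S_{CCC}| + 3|S_{AAC}| + 2|S_{AAA}| + 2|S_{ACC}|\big)$, which is the asserted value.

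I do not anticipate a genuine obstacle; the points needing care are (i) tracking the $1/m^3$ normalisation --- one factor $1/m^2$ from the uniform referee distribution and one factor $1/m$ from averaging over the third player's input --- which is identical for all three pairs; (ii) arguing cleanly that membership in $\mathcal{S}$ genuinely decouples the optimisation across input triples, so that the $\max$ passes through the outer sum; and (iii) the bookkeeping that matches each parity class of $\big(f(x,y),f(y,z),f(x,z)\big)$ to the correct one of $S_{AAA}, S_{ACC}, S_{AAC}, S_{CCC}$. The promised generalisation (partial $f$, non-uniform $\mu$, unequal alphabets) then follows by carrying a general weight $\mu(x,y,z)$ through the same pointwise argument and restricting the sum to the domain of $f$, introducing no new ideas.
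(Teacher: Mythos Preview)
Your proposal is correct and follows essentially the same route as the paper: expand the objective into a single sum over $(x,y,z)$, use the absence of cross-input constraints in $\mathcal{S}$ to decouple the optimisation term-by-term, and then classify input triples by the parity of $\big(f(x,y),f(y,z),f(x,z)\big)$ to get per-triple value $3$ or $2$. Your use of the identity $(a\oplus b)\oplus(b\oplus c)\oplus(a\oplus c)=0$ and the Hamming-distance reformulation is a slightly crisper packaging of what the paper states as a ``frustration-free vs.\ frustrated'' Fact, but the argument is the same in substance.
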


%Therefore, let us first calculate the average value over all inputs $z$ (respectively for players $A$ and $B$) and return to the point about specific inputs later.
\begin{proof}
We have
\begin{eqnarray}
\label{eq:mono-S}
&&\max_{\{P(a,b,c|x,y,z)\} \in \mathcal{S}} \bigg[\omega(G_{AB}) + \omega(G_{AC}) + \omega(G_{BC}) \bigg] \nonumber \\
&=& \max_{\{P(a,b,c|x,y,z)\} \in \mathcal{S}} \bigg[ \sum_{x,y} \frac{1}{m^2} V(a\oplus b = f(x,y)) \sum_{z} \frac{1}{m} \sum_{c} P(a,b,c|x,y,z)  \nonumber \\
&& \qquad  \qquad \qquad \qquad + \sum_{y,z} \frac{1}{m^2} V(b\oplus c = f(y,z)) \sum_{x} \frac{1}{m} \sum_{a} P(a,b,c|x,y,z) \nonumber \\
&& \qquad \qquad \qquad \qquad + \sum_{x,z} \frac{1}{m^2} V(a\oplus c = f(x,z)) \sum_{y} \frac{1}{m} \sum_{b} P(a,b,c|x,y,z) \bigg] \nonumber \\
&=& \max_{\{P(a,b,c|x,y,z)\} \in \mathcal{S}} \frac{1}{m^3} \sum_{x,y,z} \bigg[ \sum_{c} V(a \oplus b = f(x,y)) + \sum_{a} V(b \oplus c = f(y,z)) + \sum_{b} V(a \oplus c = f(x,z)) \bigg] P(a,b,c|x,y,z). \nonumber \\
\end{eqnarray}
Now, we note that the set of signalling correlations $\mathcal{S}$ is the convex hull of deterministic behaviors $P(a,b,c|x,y,z) \in \{0,1\}$ with $P(a,b,c|x,y,z)$ obeying the constraints of non-negativity and normalisation. As such, the maximization in \eqref{eq:mono-S} is achieved at a deterministic point, and the maximization can be thought of as an Integer Linear Program (the objective function and the constraints are linear, while the variables $P(a,b,c|x,y,z)$ take integer values in $\{0,1\}$). While an Integer Linear Program is in general hard to solve, crucially in the case of Signalling theories, we observe that the maximization can be done individually term-by-term. This is because the global maximum in this case can be achieved as a sum over all the maximum values of individual terms. This observation leads us to consider the maximum value of the individual terms in \eqref{eq:mono-S} for each triple $(x,y,z)$. We observe the following:
\begin{fact}
\label{fact:frustrate-XOR}
For all input triples $(x,y,z)$ such that $\left( f(x,y), f(y,z), f(x,z) \right) \in \big\{(0,0,0), (0,1,1), (1,0,1), (1,1,0) \big\}$ there exists a normalised distribution $P(a,b,c|x,y,z)$ such that 
\begin{eqnarray}
\sum_{c} V(a \oplus b = f(x,y)) P(a,b,c|x,y,z) = \sum_{a} V(b \oplus c = f(y,z)) P(a,b,c|x,y,z) = \sum_{b} V(a \oplus c = f(x,z)) P(a,b,c|x,y,z) = 1. \nonumber \\
\end{eqnarray}
On the other hand, for input triples $(x,y,z)$ such that $\left( f(x,y), f(y,z), f(x,z) \right) \in \big\{(0,0,1), (0,1,0), (1,0,0), (1,1,1) \big\}$, we have
\begin{equation}
\max_{P(a,b,c|x,y,z)}  \bigg[ \sum_{c} V(a \oplus b = f(x,y)) + \sum_{a} V(b \oplus c = f(y,z)) + \sum_{b} V(a \oplus c = f(x,z)) \bigg] P(a,b,c|x,y,z) = 2.
\end{equation}
\end{fact}
The fact follows by considering that the cases $\left( f(x,y), f(y,z), f(x,z) \right) \in \big\{(0,0,0), (0,1,1), (1,0,1), (1,1,0) \big\}$ are \textit{frustration-free}, i.e., there is an output triple $(a,b,c) \in \{0,1\}^3$ that can satisfy the winning conditions of all three games in this case. On the other hand, for input triples such that $\left( f(x,y), f(y,z), f(x,z) \right) \in \big\{(0,0,1), (0,1,0), (1,0,0), (1,1,1) \big\}$ no output triple $(a,b,c) \in \{0,1\}^3$ exists that can satisfy all three winning conditions simultaneously even when the inputs of all players are broadcast to all parties, and at most two winning conditions can be simultaneously satisfied. 

Using the Fact \ref{fact:frustrate-XOR}, we see that the maximum in \eqref{eq:mono-S} is given by
\begin{eqnarray}
\max_{\{P(a,b,c|x,y,z)\} \in \mathcal{S}} \bigg[\omega(G_{AB}) + \omega(G_{AC}) + \omega(G_{BC}) \bigg] = \frac{2}{m^3} \left(|S_{AAA}| + |S_{ACC}|\right) + \frac{3}{m^3} \left(|S_{CCC}| + |S_{AAC}| \right),
\end{eqnarray}
where 
\begin{eqnarray}
S_{AAA} &:=& \left\{ (x,y,z) \in X \times Y \times Z \big| f(x,y) = f(y,z) = f(x,z) = 1 \right\}, \nonumber \\ 
S_{ACC} &:=& \left\{ (x,y,z) \in X \times Y \times Z \big| (f(x,y), f(y,z), f(x,z)) \in \{ (1,0,0), (0,1,0), (0,0,1) \} \right\}, \nonumber \\ 
S_{AAC} &:=& \left\{ (x,y,z) \in X \times Y \times Z \big| (f(x,y), f(y,z), f(x,z)) \in \{ (1,1,0), (1,0,1), (0,1,1) \} \right\}, \nonumber \\ 
S_{CCC} &:=& \left\{ (x,y,z) \in X \times Y \times Z \big| f(x,y) = f(y,z) = f(x,z) = 0 \right\}.
\end{eqnarray}
\end{proof}
For example, for a game such as CHSH where the players receive inputs $x,y \in \{0,1\}$ with $f(0,0) = f(0,1) = f(1,0) = 0$ and $f(1,1) = 1$ we have that $S_{CCC} = \{ (0,0,0), (0,1,0), (1,0,0), (0,0,1) \}$, $S_{AAA} = \{(1,1,1) \}$, $S_{AAC} = \emptyset$ and $S_{ACC} = \{ (1,1,0), (1,0,1), (0,1,1) \}$. We therefore calculate that 
\begin{eqnarray}
\max_{\{P(a,b,c|x,y,z)\} \in \mathcal{S}} \bigg[\omega(CHSH_{AB}) + \omega(CHSH_{AC}) + \omega(CHSH_{BC}) \bigg] = \frac{1}{8}(2 \cdot 4 + 3 \cdot 4) = \frac{5}{2}. 
\end{eqnarray}

Interestingly, monogamy relations hold in signalling theories even for games that admit a classical winning strategy for two players. For example, consider the classically winnable two-player game $G^{cl}$ with inputs $x,y \in \{0,1\}$ and with $f(0,0) = f(0,1) = 0$ and $f(1,0) = f(1,1) = 1$. It is clear that a classical winning strategy exists for this game in the two-player scenario, for instance Alice outputs $a = x$ and Bob outputs $b = 0$ for any input $y$. 
We have that $S_{CCC} = \{(0,0,0), (0,0,1)\}$, $S_{AAA} = \{(1,1,1), (1,1,0)\}$, $S_{ACC} = \{(0,1,0), (0,1,1)\}$ and $S_{AAC} = \{(1,0,0), (1,0,1) \}$. We therefore calculate that 
\begin{eqnarray}
\max_{\{P(a,b,c|x,y,z)\} \in \mathcal{S}} \bigg[\omega(G^{cl}_{AB}) + \omega(G^{cl}_{AC}) + \omega(G^{cl}_{BC}) \bigg] = \frac{1}{8}(2 \cdot 4 + 3 \cdot 4) = \frac{5}{2}. 
\end{eqnarray}

Finally, let us also note the existence of monogamies of two-player games in signalling theories for specific inputs of the third player. For instance, we might consider an expression such as
\begin{equation}
\max_{\{P(a,b,c|x,y,z)\} \in \mathcal{S}} \bigg[\omega(G_{AB})\vert_{z=0} + \omega(G_{AC})\vert_{y=0} + \omega(G_{BC})\vert_{x=0} \bigg],
\end{equation} 
where we consider the value of the two-player game when the third player receives the particular input $0$. In such an optimization, it is possible to violate the monogamy relations derived above. Consider for instance the CHSH game where the players receive inputs $x,y \in \{0,1\}$ with $f(0,0) = f(0,1) = f(1,0) = 0$ and $f(1,1) = 1$. We can calculate by means of an integer linear program that 
\begin{equation}
\max_{\{P(a,b,c|x,y,z)\} \in \mathcal{S}} \bigg[\omega(CHSH_{AB})\vert_{z=0} + \omega(CHSH_{AC})\vert_{y=0} + \omega(CHSH_{BC})\vert_{x=0} \bigg] = 3.
\end{equation}
An optimal strategy achieving this value is the signalling behavior $\{P(a,b,c|x,y,z) \}$ given as $P(0,0,0|0,0,0) = P(0,0,0|0,0,1) = P(0,0,0|0,1,0) = P(0,0,1|0,1,1) = 1$ and $P(1,1,1|1,0,0) = P(1,1,0|1,0,1) = P(1,0,1|1,1,0) = P(1,1,1|1,1,1) = 1$.

%\section{Multiparty jamming scenarios}

%\textcolor{red}{\textbf{TM:} Tentative subsection. Not sure where the two results below should go.}

\subsection{Jamming of $n$-particle correlations}

Let $\M$ be a spacetime with at least two spatial dimensions. For any $n \geq 2$, it is always possible to arrange a configuration of spacetime points $p, q_1, \ldots, q_n$ in such a way that
\begin{align}
\label{geom0}
p \prec_o \bq \quad \textnormal{but} \quad p \not\prec_o \bq^F \textnormal{ for any } F \subsetneq \{1,\ldots,n\}.
\end{align}
In other words, one can always arrange a scenario in which a single agent may jam $n$-party correlations, but not $k$-party correlations for any $k < n$. Observe that, in fact, it suffices to prove the latter property for $k = n-1$, so from now on we shall assume $F = \{j_0\}^\complement$.

Below we shall construct such a configuration of spacetime points in a $(1+2)$-dimensional Minkowski spacetime $\R^{1,2}$. Since every spacetime is locally diffeomorphic to a Minkowski spacetime, one can then embed this configuration into any given $\M$.

Let $p = (h,0,0)$, where $h \in (0,1)$ is yet unspecified, and define $q_j = (0,\cos(2j\pi/n), \sin(2j\pi/n))$ for $j=1,\ldots,n$. Let $\itSigma_t$ denote the timeslice $\{t\} \times \R^2$. The condition $p \prec_o \bq$ is equivalent to $\cap_j J^+(q_j) \subset R^+(p)$ (cf. \eqref{rel2}), which can in turn be expressed timeslice-wise as
\begin{align*}
\forall t \geq 0 \quad \bigcap\nolimits_j \itSigma_t \cap J^+(q_j) \subset \itSigma_t \cap R^+(p),
\end{align*}
where we have restricted to $t \geq 0$ because otherwise the sets $\itSigma_t \cap J^+(q_j)$ are all empty.

Let $D((x,y),r) \subset \R^2$ denote the open disc of radius $r$ centered at $(x,y)$ and let $\overline{D}((x,y),r)$ denote its closed counterpart. Observe that, for any $t \geq h > 0$,
\begin{align*}
\itSigma_t \cap J^+(q_j) = \{t\} \times \overline{D}\big( (\cos\tfrac{2j\pi}{n}, \sin\tfrac{2j\pi}{n}),t\big) \quad \textnormal{ and } \quad \itSigma_t \cap R^+(p) = \{t\} \times D((0,0),t-h),
\end{align*}
so ensuring that $p \prec_o \bq$ boils down to finding $h$ such that
\begin{align}
\label{geom1}
\forall t \geq 1 \quad \bigcap\nolimits_j \overline{D}\big( (\cos\tfrac{2j\pi}{n}, \sin\tfrac{2j\pi}{n}),t\big) \subset D((0,0),t-h),
\end{align}
where we have further restricted to $t \geq 1$, because for $t < 1$ the intersection on the left-hand side is empty and the inclusion is trivially true.

Similarly, the condition $p \not\prec_o \bq^F$ for any $F = \{j_0\}^\complement$ can be reduced to
\begin{align}
\label{geom2}
\forall j_0 \ \exists t \geq 1 \quad \bigcap\nolimits_{j \neq j_0} \overline{D}\big( (\cos\tfrac{2j\pi}{n}, \sin\tfrac{2j\pi}{n}),t\big) \not\subset D((0,0),t-h).
\end{align}

\begin{figure}[H]
\begin{center}
\resizebox{0.48\textwidth}{!}{\includegraphics[scale=1]{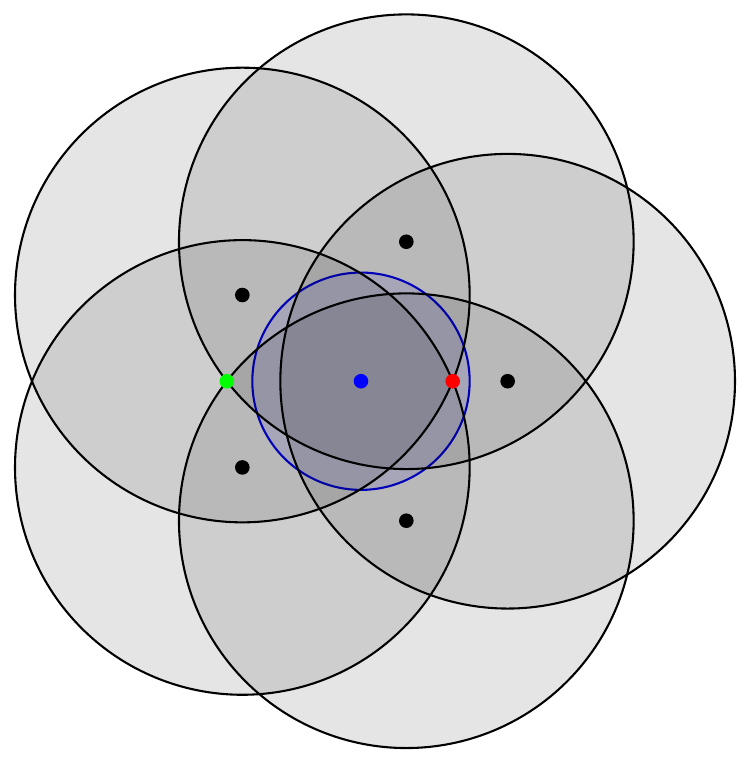}} \qquad
\caption{\label{fig:njam} Illustration of conditions (\ref{geom1},\ref{geom2}) for $n=5$. The black dots lie at $(\cos\tfrac{2j\pi}{n}, \sin\tfrac{2j\pi}{n})$ for $j=1,\ldots,n$, whereas the blue dot lies at $(0,0)$. Condition \eqref{geom1} says that the intersection of all gray discs of radius $t$ (the central darkest region) must lie within the blue disc of radius $t-h$ for all $t\geq 1$. This is equivalent to demanding that the radial coordinate of the red dot, which can be found to be $\sqrt{t^2+\cos\tfrac{2\pi}{n}-2\cos\tfrac{\pi}{n}\sqrt{t^2-\sin^2\tfrac{\pi}{n}}}$ (for $t\geq 1$), is less than $t-h$, leading to \eqref{geom3}. On the other hand, condition \eqref{geom2} says that the intersection of all gray discs but one (the central darkest region with one of the adjacent ``petals'') must stick out of the blue disc for some $t\geq 1$. This is equivalent to demanding that the radial coordinate of the green dot, which can be found to be $\sqrt{t^2-\sin^2\tfrac{2\pi}{n}} - \cos\tfrac{2\pi}{n}$ (for $t\geq 1$), is larger than $t-h$ for some $t$, yielding \eqref{geom4}.}
\end{center}
\end{figure}

After some tedious but straightforward calculations involving finding the coordinates of the intersection points of appropriate circles (see Fig. \ref{fig:njam}), condition \eqref{geom1} can be equivalently expressed as
\begin{align}
\label{geom3}
\forall t \geq 1 \quad h < t - \sqrt{t^2+\cos\tfrac{2\pi}{n}-2\cos\tfrac{\pi}{n}\sqrt{t^2-\sin^2\tfrac{\pi}{n}}} \cv f(t).
\end{align}
But since the function $f(t)$ is decreasing\footnote{One can indeed show that $f'(t) < 0$ for all $t>1$ by direct (albeit lengthy) calculations, which we skip here.} on $[1,\infty)$, this is in turn equivalent to 
\begin{align}
\label{geom4}
h \leq \lim_{t \rightarrow +\infty} f(t) = \cos\tfrac{\pi}{n}.
\end{align}
Similarly, after some analytical geometry, condition \eqref{geom2} can be equivalently written as
\begin{align}
\label{geom5}
\exists t \geq 1 \quad h > t + \cos\tfrac{2\pi}{n} - \sqrt{t^2-\sin^2\tfrac{2\pi}{n}} \cv g(t).
\end{align}
But since the function $g(t)$ is nonincreasing, this is in turn equivalent to
\begin{align}
\label{geom6}
h > \lim_{t \rightarrow +\infty} g(t) = \cos\tfrac{2\pi}{n}.
\end{align}
All in all, we have thus obtained that for any $h \in (\cos\tfrac{2\pi}{n}, \cos\tfrac{\pi}{n}]$ the above configuration of spacetime points indeed satisfies condition \eqref{geom0}.

\hrulefill

\bibliographystyle{naturemag}
\bibliography{ERC}

\hrulefill

\end{document}